\documentclass[preprint,spthm,nohyperref]{iacrtrans}
\usepackage[utf8]{inputenc}
\usepackage{amsmath,amsfonts,amssymb}
\usepackage{mathtools}
\usepackage{xspace}
\usepackage{graphicx}
\usepackage[ruled,vlined,linesnumbered]{algorithm2e}
\DontPrintSemicolon

\usepackage{listings}

\usepackage{booktabs}
\usepackage{multirow}
\usepackage{physics}
\usepackage{stmaryrd}
\usepackage{color}
\usepackage{nicefrac}
\usepackage{thm-restate}
\usepackage{calrsfs}
\usepackage[binary-units]{siunitx}
\usepackage{mdframed}
\usepackage{units}

\usepackage{tikz}
\usetikzlibrary{calc}
\usetikzlibrary{shapes.geometric}
\usetikzlibrary{arrows}
\usetikzlibrary{automata}
\usetikzlibrary{positioning}

\newcommand{\Wave}{\textsc{Wave}\xspace}
\newcommand{\Wavelet}{\textsc{Wavelet}\xspace}

\newcommand{\Supertubos}{\emph{Supertubos}\xspace}

\newcommand{\FF}{\ensuremath{\mathbb{F}}}
\newcommand{\F}{\ensuremath{\mathbb{F}}}
\newcommand{\ZZ}{\ensuremath{\mathbb{Z}}}
\newcommand{\Fq}{\ensuremath{\F_q}}

\renewcommand{\vec}[1]{\mathbf{#1}}

\makeatletter
\newcommand*{\transp}{{\mathpalette\@transpose{}}}
\newcommand*{\@transpose}[2]{\raisebox{\depth}{$\m@th#1\intercal$}}
\makeatother
\newcommand{\transpose}[1]{{#1}^{\transp}}

\newcommand*{\eqdef}{:=}

\newcommand{\Hpub}{\vec{H}_{\textup{pk}}}
\newcommand{\Hsec}{\vec{H}_{\textup{sk}}}
\newcommand{\sk}{\ensuremath{\mathtt{sk}}}
\newcommand{\pk}{\ensuremath{\mathtt{pk}}}
\newcommand{\Diag}{\mathbf{Diag}}
\newcommand{\hsp}{\odot}

\newcommand{\Unif}{\xleftarrow{\$}}
\newcommand{\pisec}{\pi_{\textup{sk}}}

\newcommand\sample{\mathrel{\overset{\vbox to.7ex{\hbox{\fontsize{6}{0}\selectfont\vspace{.5ex} \$}}}{\leftarrow}}}

\newcommand{\Sym}[1]{\mathcal{S}_{#1}}

\SetKwProg{Function}{Function}{}{}
\SetKwInput{KwParameters}{Parameters}
\SetKwInput{KwPrecomputation}{Precomputation}
\SetKwFunction{FreeSetU}{FreeSetU}
\SetKwFunction{FreeSetV}{FreeSetV}
\SetKwFunction{ParityCheckUV}{ParityCheckUV}
\SetKwFunction{SubmatrixRank}{SubmatrixRank}
\SetKwFunction{RandPerm}{RandPerm}
\SetKwFunction{RandPermSet}{RandPermSet}
\SetKwFunction{SwapPerm}{SwapPerm}
\SetKwFunction{GaussElim}{GaussElim}
\SetKwFunction{ElimSingle}{ElimSingle}
\SetKwFunction{AcceptSampleU}{AcceptSampleU}
\SetKwFunction{AcceptSampleV}{AcceptSampleV}
\SetKwFunction{KeyGen}{KeyGen}
\SetKwFunction{Sign}{Sign}
\SetKwFunction{Decode}{Decode}
\SetKwFunction{SampleU}{SampleU}
\SetKwFunction{SampleV}{SampleV}
\SetKwFunction{Verify}{Verify}
\SetKwFunction{TVerify}{TVerify}
\SetKwFunction{STVerify}{STVerify}
\SetKwFunction{Convert}{Convert}
\SetKwFunction{Hash}{Hash}
\SetKwFunction{BinaryHash}{BinaryHash}
\SetKwFunction{Ternarize}{Ternarize}
\SetKwFunction{Expand}{Expand}
\SetKwFunction{Supp}{Support}
\SetKwFunction{Compress}{Compress}
\SetKwFunction{Decompress}{Decompress}
\SetKwFunction{MREncode}{MREncode}
\SetKwFunction{MRDecode}{MRDecode}
\SetKwFunction{NextPair}{NextPair}
\SetKwFunction{Positions}{Positions}
\SetKwFunction{FetchRow}{FetchRow}
\SetKwFunction{BinaryWeight}{BinaryWeight}
\SetKwFunction{Normalize}{NegateRow}
\SetKwFunction{SwapRows}{SwapRows}
\SetKwFunction{SubtractRow}{SubtractRow}
\SetKwFunction{SumRow}{SumRow}
\SetKwFunction{Compression}{Compression}
\SetKwFunction{GetPos}{GetPos}
\SetKwFunction{Length}{CodeLength}
\SetKwFunction{code}{Code}
\SetKwFunction{Decompression}{Decompression}
\SetKwFunction{Dec}{Decode}
\SetKwFunction{Find}{Find}
\SetKw{In}{in}
\SetKw{Where}{where}
\SetKw{And}{and}
\SetKw{True}{True}
\SetKw{False}{False}

\newcommand{\DuoHuff}{2TMR\xspace}
\newcommand{\TriHuff}{3TMR\xspace}

\lstdefinestyle{customc}{
  belowcaptionskip=1\baselineskip,
  breaklines=true,
  frame=L,
  xleftmargin=\parindent,
  language=C,
  showstringspaces=false,
  basicstyle=\footnotesize\ttfamily,
  keywordstyle=\bfseries\color{green!40!black},
  commentstyle=\itshape\color{purple!40!black},
  identifierstyle=\color{blue},
  stringstyle=\color{orange},
}

\begin{document}
\title[\Wavelet: Code-based postquantum signatures for microcontrollers]{\Wavelet: Code-based postquantum signatures with fast
verification on microcontrollers}

\author{Gustavo Banegas\inst{1} \and Thomas Debris-Alazard\inst{1} \and Milena Nedeljković\inst{2} \and Benjamin Smith\inst{1}}
\institute{Inria and Laboratoire d’Informatique de l’École polytechnique, Institut Polytechnique de Paris, Palaiseau, France, \email{gustavo@cryptme.in, thomas.debris@inria.fr, smith@lix.polytechnique.fr} \and
           École polytechnique, Institut Polytechnique de Paris, Palaiseau, France \email{milena.nedeljkovic@polytechnique.edu}}
\authorrunning{Banegas, Debris-Alazard, Nedeljković, Smith}

\maketitle              

\keywords{Post-quantum Cryptography  \and Code-based Signature \and
Fast Verification \and Implementation \and Embedded devices.}

\begin{abstract}
    This work presents the first full implementation of \Wave,
    a postquantum code-based signature scheme.
    We define \Wavelet,
    a concrete \Wave scheme
    at the 128-bit classical security level (or NIST postquantum
    security Level~1)
    equipped with a fast verification algorithm targeting embedded devices.
    \Wavelet offers 930-byte signatures,
    with a public key of 3161 kB.
    We include implementation details
    using AVX instructions,
    and on ARM Cortex-M4,
    including a solution to deal with \Wavelet's large public keys,
    which do not fit in the SRAM of a typical embedded device.
    Our verification algorithm is $\approx 4.65 \times$ faster
    then the original,
    and verifies in $1\,087\,538$ cycles
    using AVX instructions,
    or $13\,172$ ticks in an ARM Cortex-M4.
\end{abstract}

\begingroup
\makeatletter
\def\@thefnmark{} \@footnotetext{\relax
Author list in alphabetical order; see
\url{https://www.ams.org/profession/leaders/culture/CultureStatement04.pdf}.
This work was funded in part
by the European Commission through H2020
SPARTA, \url{https://www.sparta.eu/} and
by the French Agence Nationale de la Recherche through ANR JCJC COLA (ANR-21-CE39-0011).
Generic disclaimer:
``Any opinions, findings, and conclusions or recommendations expressed in this material are those
of the author(s) and do not necessarily reflect the views of the funding agencies.''
}
\endgroup

\section{Introduction}
\label{sec:introduction}

Ensuring the long-term security of constrained devices is a critical,
and perennial, problem.
The secure and efficient implementation of public-key cryptographic
algorithms for microcontrollers is vital,
but difficult given the constrained resources available on these devices.
The difficulties are magnified when it comes to implementing
\emph{post-quantum} cryptosystems, which generally have much larger keys and more intensive
computing requirements than their elliptic-curve equivalents.

In this article,
we focus on the problem of implementing post-quantum signature schemes
with efficient verification on microcontrollers.
Our scheme, \Wavelet, is a variant of
\Wave~\cite{2019/Debris-Alazard--Sendrier--Tillich},
a post-quantum trapdoor based on the hardness of problems in coding theory.
\Wave has attractive properties,
including \emph{simple verification} and relatively \emph{short signatures}.
In a sea of broken code-based signature schemes
(including~\cite{Baldi0CRS13}, \cite{PhessoT16},
\cite{SongHMWW20}, and~\cite{AragonBDKPS21}),
\Wave maintains the promise of high security.
\Wave also presents some interesting practical challenges:
for example,
the \Wave designers recommend working over the finite field \(\FF_3\),
which is highly unusual in contemporary cryptographic software.

\Wavelet is a signature scheme based on \Wave over \(\FF_3\),
but designed for efficient verification on constrained devices.
\Wavelet keys and signatures
are easily converted to \Wave keys and signatures over \(\FF_3\),
and vice versa;
in particular,
\Wavelet inherits its security from \Wave over \(\FF_3\).
There are two main benefits to \Wavelet compared with \Wave:
shorter signatures, and faster verification.

\Wave signatures are already remarkably short,
but \Wavelet signatures are even shorter.
We achieve this on two levels:
first, we truncate \Wave signatures from length \(n\) (the code
length) to length \(k\) (the code dimension).
This already represents a substantial improvement:
for our parameters, truncated signatures are one-third shorter.
But these truncated signatures can be compressed further
using minimal-redundancy coding, at minimal computational cost.

The end result is that \Wavelet signatures are the shortest among
code-based signatures, and nearly half the length
of \Wave signatures for the same code parameters.
In practice, when targeting the 128-bit classical security level
(and NIST postquantum Level 1 security),
\Wavelet signatures fit in under one kilobyte.
This is competitive with lattice-based postquantum signatures:
NIST Round-3 candidates Dilithium~\cite{Dilithium} and Falcon~\cite{Falcon}
targeting similar security levels
offer 2420- and 666-byte signatures, respectively
(see Table~\ref{tab:sizes} for a more complete comparison).

\Wavelet's signature verification is also faster---and not just because
the signatures are shorter.
Indeed, a subtle tweak to the public key
allows us to avoid loading or operating on nearly half of its data
during any given signature verification.
This yields an important speedup by drastically reducing the number of
field operations,
but also by reducing the impact of memory latency---an effect that is
further exaggerated in microcontrollers by the fact that the public key
is typically too large for the RAM.

We have put this all into practice
with a free and portable C implementation of \Wavelet for x86\_64,
with verification on x86\_64 (with and without AVX) and on Cortex-M4 platforms,
targeting the 128-bit security level (and NIST Level 1).
This is the first full implementation of a \Wave signature scheme\footnote{%
    A proof-of-concept implementation of the original \Wave trapdoor
    was published with~\cite{2019/Debris-Alazard--Sendrier--Tillich}
    at~\cite{Wave-implem},
    but it cannot be used to sign messages:
    in particular,
    it does not include the ternary hash function
    required to hash messages.
}
on any platform.
We show how to use high-speed bitsliced \(\FF_3\)-vector arithmetic
to speed \Wave systems,
and use external Flash via QSPI
to handle very large public keys (over 3MB).
Using this software,
\Wavelet signatures
can be verified on an Intel\textsuperscript{\tiny\textregistered}
Core\textsuperscript{TM} PC in 450\(\mu\)s,
and on an ARM Cortex-M4 powered board in 402ms.

\begin{table}[htp]
    \caption{Size of private/public keys and signatures
    for postquantum signature schemes.}
\label{tab:sizes}
\begin{center}
\begin{tabular}{r|rrrr}
\toprule
Algorithm & Private Key (\si{\byte}) & Public Key (\si{\byte})& Signature (\si{\byte}) \\
\midrule
 \Wavelet             & $32$ or $11043162$    & $3236327$    & $930$    \\
 \midrule
    Falcon-512~\cite{Falcon}               & $1281$    & $897$   & $666$    \\
    Dilithium2~\cite{Dilithium}            & $2528$    & $1312$  & $2420$    \\
 LMS ~\cite{rfc8554}        & $64$    & $60$    & $4756$    \\
 SPHINCS$^+$-128f~\cite{SPHINCS}     & $64$    & $32$    & $17088$ \\
 GeMSS~\cite{GeMSS}                & $48$    & $14520$  & $417416$ \\
 Picnic3-L1~\cite{Picnic}          & $17$    & $34$     &  $13802$ \\
\bottomrule
\end{tabular}
\end{center}
\end{table}

\paragraph{Organization of the paper}
Section~\ref{sec:high-level} gives
an overview of \Wave,
defines the \Supertubos parameters,
and explains the improvements that lead to our concrete scheme, \Wavelet.
Section~\ref{sec:scheme} gives detailed versions of
the algorithms in \Wavelet. Section~\ref{sec:encodings}
gives useful binary representations for ternary keys,
describes the efficient compression and decompression
of \Wave and \Wavelet signatures,
and calculates key and (compressed) signature sizes.
Section~\ref{sec:implementation}
gives details on the implementation of \Wavelet in software.
Section~\ref{sec:results} shows the experimental results from our
implementation on x86\_64 (with and without AVX) and Cortex-M4 platforms.
We conclude in Section~\ref{sec:conclusion},
giving perspectives on future development.

\paragraph{Notation and conventions}
Vectors and matrices are written in bold. Vectors are in row notation.
Indices start at~\(0\):
if \(\vec{v}\) is a vector in \(\FF_q^N\),
then its components are \((v_0,\ldots,v_{N-1})\);
if \(\vec{H}\) is a matrix in \(\FF_q^{M\times N}\),
then its row vectors are \(\vec{H}_0,\ldots,\vec{H}_{M-1}\) and $\vec{H}_{i,j}$ is its coefficient at position $(i,j)$.
The Hamming weight of \(\vec{v}\)
is \(|\vec{v}| := \#\{ 0 \leq i < N \mid v_{i} \neq 0 \}\).
The concatenation of $\vec{u}$ and $\vec{v}$ is denoted by $\vec{u} \parallel \vec{v}$.

The set of permutations of \(\{0,\ldots,N-1\}\) is denoted
by \(\Sym{N}\).
We represent a permutation \(\pi\) in \(\Sym{N}\)
by the sequence \((\pi(0),\ldots,\pi(N-1))\)
(and we warn the reader that this is \emph{not} cycle notation).
Permutations act on vector coordinates:
\[
    \pi(\vec{v})
    :=
    (v_{\pi(0)},\ldots,v_{\pi({N-1})})\parallel(v_{N},\ldots,v_{N'-1})
    \quad
    \text{for }
    \pi \in \Sym{n}
    \text{ and }
    \vec{v} \in \FF_q^{N'}
    \text{ with }
    N' \ge N
    \,.
\]
If $\vec{H}$ is a matrix in \(\FF_q^{M\times N}\),
then $\pi(\vec{H})$ is the matrix
with row vectors $\pi(\vec{H}_{0}),\ldots, \pi(\vec{H}_{M-1})$.

To ease analysis, the functions and subroutines in our algorithms do \emph{not} modify
their arguments.
In practice, we (strongly) recommend implementing them to modify large vector and
matrix arguments in-place, to save memory.
Finally, if $\mathcal{E}$ is a finite set, then
$x \Unif \mathcal{E}$ means that
$x$ is sampled uniformly at random from $\mathcal{E}$.

\section{
    \Wave and \Wavelet
}
\label{sec:high-level}
We begin with a high-level view of
\Wave,
before explaining the modifications that yield \Wavelet.
This will serve as a high-level tour of our main theoretical results.

\subsection{\Wave}

\Wave is a
full-domain hash (FDH) signature scheme~\cite{BellareR96,Coron02}.
Fix a prime power \(q \not= 2\),
and
let $\vec{H}\in\Fq^{(n-k)\times n}$ be a parity-check matrix for a code
$W = \{ \vec{x}\in\Fq^{n} \mid \vec{x}\transpose{\vec{H}} = \mathbf{0} \}$
of dimension \(k\) and length \(n\)
over \(\FF_q\).
Fix a weight \(w < n\).
If \(W\) is a random code,
then for carefully chosen \(w\),
the function mapping error vectors \(\vec{e}\) in \(\FF_q^n\) of weight \(w\)
to their syndromes \(\vec{s} = \vec{e}\transpose{\vec{H}}\)
is a one-way function:
inverting it corresponds to decoding a random linear code.
But if we suppose that \(W\) is equipped with a trapdoor
allowing us to invert the one-way function,
then we
get the FDH signature scheme of
Figure~\ref{fig:scheme}.

\begin{figure}[ht]
  {\small
  \centering
  \begin{mdframed}\small
   \begin{minipage}{.5\textwidth}
     \underline{Sign$(m, \sk)$:} \\
     \begin{tabular}{ll}
        & $r  \Unif \{0,1\}^{\lambda}$ \\
        & $\vec{s} \gets \texttt{Hash}(m, r)$\\
        & $\vec{e} \gets \texttt{InvAlg}(\vec{s}, \sk)$ \\
        & $\texttt{return}(\vec{e},r)$\\
     \end{tabular}
   \end{minipage}%
   \begin{minipage}{0.5\textwidth}
     \underline{Verify$(m, (\vec{e'}, r), \pk=(\vec{H},w))$:} \\
     \begin{tabular}{ll}
        & $\vec{s} \gets \texttt{Hash}(m, r)$\\
        & $\texttt{If} \text{ }\vec{e'}\transpose{\vec{H}} == \vec{s}$ and $|\vec{e'}| == w$: \\
        & $\qquad\texttt{return}\text{ }\texttt{True}$\\
        & $\texttt{Else:}$\\
        & $\qquad\texttt{return} \text{ } \perp$\\
     \end{tabular}
   \end{minipage}
  \end{mdframed}
  }
  \caption{Code-based signature using FDH paradigm (on a high level).
        The function $\texttt{InvAlg}$ is a ``decoder''
        that inverts the trapdoor,
        correctly finding an error $\vec{e}$ with $|\vec{e}| = w$.
    }
  \label{fig:scheme}
\end{figure}

The \Wave trapdoor is built from two random linear codes \(U\) and \(V\)
of length \(n/2\)
and dimensions \(k_U\) and \(k_V\), respectively,
over \(\FF_q\).
These codes are combined way to form a code \(W\)
of length \(n\) and dimension \(k = k_U + k_V\),
using a construction explained in~\S\ref{sec:keygen}.
The public key is a parity-check matrix \(\Hpub\) in
\(\FF_q^{(n-k)\times n}\) of the code $W$;
the private key consists of \(U\), \(V\),
and data
allowing us to map decoding problems into \(U\) and \(V\).
The parameters \(k_U\), \(k_V\), \(n\), and \(w\)
are carefully balanced,
and the construction of \(W\) from \(U\) and \(V\) carefully designed,
so that the map
\(\vec{e} \mapsto \vec{s} = \vec{e}\transpose{\Hpub}\)
is a trapdoor:
computing a solution \(\vec{e}\) of the correct weight \(w\)
for a given \(\vec{s}\)
is cryptographically hard \emph{unless} we have the secret key,
in which case we can use the decoding algorithm
described in~\S\ref{sec:sign}.

To construct a signature scheme from this trapdoor,
it suffices to let \(\vec{h}\) be the hash of a message \(m\)
with a random salt \(r\);
the \Wave signature is the pair \((\vec{e}, r)\).
To verify the signature,
it suffices to check
that \(\vec{e}\transpose{\Hpub}\) is the hash of \(m\) and \(r\)
and that \(|\vec{e}| = w\).

The theoretical security of \Wave
is beyond the scope of this article.
Details and proofs appear in~\cite{2019/Debris-Alazard--Sendrier--Tillich}.

\subsection{Parameters}
\label{sec:parameters}

The fundamental \Wave parameters
are a classical security parameter \(\lambda\),
a field size \(q \not= 2\),
vector space dimensions \(n\), \(k_U\), and \(k_V\),
a Hamming weight \(w\),
and a parameter \(d\) used in the rejection sampling
in the signing algorithm.
Concrete values for these parameters targeting 128-bit classical
security and NIST Level 1 postquantum security
were specified in~\cite{2019/Debris-Alazard--Sendrier--Tillich},
and are reproduced here in Table \ref{table:genParameters}.
We call this parameter set \Supertubos.\footnote{%
    See \texttt{39.3410294768957, -9.36095797386207}.
}

\begin{table}[ht]
 	\centering
    \begin{tabular}{|c||c|c|c|c|c@{ = }c@{ + }c|c|}
        \hline
            Parameters & $\lambda$ & $q$ & $n$ & $w$ & $k$ & $k_{U}$ & $k_{V}$ & $d$ \\
        \hline
        \hline
            \Supertubos & $128$ & $3$ & $8492$ & $7980$ & $5605$ & $3558$ & $2047$ & $81$ \\
        \hline
 	\end{tabular}
    \caption{General \Wave parameters for $128$ bits of security.\label{table:genParameters}}
\end{table}

Our \Wavelet implementation uses \Supertubos,
and the reader may keep these parameter values in mind
to get a concrete picture of \Wavelet's practical improvements.
The \Wavelet algorithms will work with any \Wave parameter set with \(q = 3\),
though the performance of the signature compression algorithm is sensitive to the ratio \(w/n\approx 0.94\).
Deriving new parameter sets that may be better suited to constrained
evironments (or other applications)
is left for future work.

\subsection{Restricting to \(q = 3\)}
The description of \Wave in~\cite{2019/Debris-Alazard--Sendrier--Tillich}
defines a trapdoor over a general finite field \(\FF_q\) with \(q \not= 2\),
but the security analysis focuses on the case \(q = 3\),
and parameters are only specified for \(q = 3\).
The analysis of \Wave with \(q > 3\) is an open question.

While~\cite{2019/Debris-Alazard--Sendrier--Tillich}
recommends restricting \Wave to \(q = 3\),
\Wavelet explicitly \emph{requires} \(q = 3\).
Forcing \(q = 3\) allows several important practical improvements,
including
\begin{itemize}
    \item
        shorter signatures through more effective compression,
        and more compact keypairs;
    \item
        simple field arithmetic
        and high-speed bitsliced vector operations (see~\S\ref{sec:ternary});
        and
    \item
        interesting algorithmic optimizations,
        especially in signature verification.
\end{itemize}

Using \(q = 3\)
also imposes some interesting practical questions,
since the data to be signed,
and the keys and signatures,
must ultimately be encoded in binary.
The choice of binary encoding for each ternary object is addressed
in~\S\ref{sec:encodings}.

But not all of these problems are a simple matter of data
representation.
For example: \Wave is a hash-and-sign signature scheme,
and it requires a cryptographic hash function \(\{0,1\}^* \to
\FF_3^{n-k}\).
No concrete hash function is specified
in~\cite{2019/Debris-Alazard--Sendrier--Tillich},
and indeed we are not aware of any standard,
high-security hash functions mapping from binary to ternary domains.
Instead, we must build one from a secure hash functions into a binary
domain.
The construction of a hash function into \(\FF_{3}^{n-k}\)
is addressed in~\S\ref{sec:hash}.

\subsection{Truncated signatures and \Wavelet public keys}
\label{sec:Wavelet-signatures}

The \Wave public key is a parity-check matrix \(\Hpub\) for the code \(W\);
but in fact, the basic \Wave key-generation algorithm
always constructs a row-reduced
\(\Hpub = (\vec{I}_{n-k}|\vec{R})\)
where \(\vec{R}\) is a random-looking \((n-k)\times k\) matrix
over \(\FF_3\).
The verification equations are
\begin{equation}
    \label{eq:Wave-verification}
    |\vec{e}| = w
    \qquad
    \text{and}
    \qquad
    \vec{h} = \vec{e}\transpose{\Hpub}
    \qquad
    \text{where}
    \qquad
    \vec{h} = \Hash(r\parallel m)
    \,.
\end{equation}

Now let \(\vec{s}\) in \(\FF_3^{k}\)
consist of the last \(k\) coordinates of \(\vec{e}\);
let
\(\vec{u} := \vec{h} - \vec{s}\transpose{\vec{R}}\)
and \(\vec{e}' := \vec{u}\parallel\vec{s}\).
If \(\vec{e}\transpose{\Hpub} = \vec{h}\),
then \(\vec{e}'\transpose{\Hpub} = \vec{h}\) too
(indeed, typically \(\vec{e} = \vec{e}'\)),
and \(|\vec{e}| = |\vec{e}'| = |\vec{u}| + |\vec{s}|\).
We can therefore transmit \((r,\vec{s})\) as the signature
instead of \((r,\vec{e})\),
thus saving \(n-k\) elements of \(\FF_3\),
and use the verification equation
\begin{equation}
    \label{eq:intermediate-verification}
    |\vec{u}| + |\vec{s}| = w
    \qquad
    \text{where}
    \qquad
    \vec{u} = \vec{h} - \vec{s}\transpose{\vec{R}}
    \qquad
    \text{with}
    \qquad
    \vec{h} = \Hash(r\parallel m)
    \,.
\end{equation}
For \Supertubos, this reduces the uncompressed signature length---and the
work involved in the vector-matrix product---by roughly one-third.

The verifier needs to compute
\(
    \vec{s}\transpose{\vec{R}}
    =
    \sum_{i=0}^{k-1}s_i\transpose{\vec{R}}_i
\),
where \(\transpose{\vec{R}}_i\) is the \(i\)-th row of \(\transpose{\vec{R}}\).\footnote{%
    One might also compute this as a vector of dot-products between
    \(\vec{s}\) and the columns of \(\transpose{\vec{R}}\);
    but when using bitsliced arithmetic,
    it seems harder to exploit the weight of \(\vec{s}\)
    to reduce the cost of this approach.
}
Since the \(s_i\) are in \(\FF_3\),
no multiplication is really required:
computing the sum
amounts to adding \(\transpose{\vec{R}}_i\) into an accumulator if \(s_i = 1\),
subtracting it if \(s_i = 2\),
and ignoring it completely if \(s_i = 0\).
Skipping rows is an important optimization,
and not just to minimize field operations:
loading rows from memory (or external flash) is expensive.

But the vector \(\vec{s}\) has unusually few \(0\)s.
Indeed,
\(|\vec{e}| = w\),
which is quite close to \(n\) in the \Wave context,
so we expect \(|\vec{s}|\) to be proportionally close to \(k\).
At first, this
may appear frustrating: there are not many rows to skip
when computing \(\vec{s}\transpose{\vec{R}}\).
However,
we can turn this imbalance around
to create a relatively sparse sum
using the following propositions.

First,
for each \(\vec{s}\) in \(\FF_3^k\),
we define a vector \(\vec{\hat{s}}\) in \(\FF_3^k\)
by
\begin{equation}
    \label{eq:s-hat-def}
    \begin{rcases}
        \hat{s}_{2i} := s_{2i} + s_{2i+1}
        \\
        \hat{s}_{2i+1} := s_{2i} - s_{2i+1}
    \end{rcases}
    \quad
    \text{for}
    \quad
    0 \le i < k/2
    \,,
    \quad
    \text{and}
    \quad
    \hat{s}_{k-1} := s_{k-1}
    \text{ if \(k\) is odd}
    \,.
\end{equation}

\begin{proposition}
    \label{prop:s-hat-weight}
    Let \(\vec{s}\) be a vector in \(\FF_3^k\),
    and let \(\vec{\hat{s}}\) be defined as
    in~\eqref{eq:s-hat-def}.
    Then
    \[
        |\vec{\hat{s}}| + |\vec{s}|
        =
        \frac{3}{2}(k+\epsilon) - 3\delta
    \]
    where
    \[
        \delta = \#\{0 \le i < k/2 \mid s_{2i} = s_{2i+1} = 0\}
        \quad
        \text{and}
        \quad
        \epsilon 
        = 
        \begin{cases}
            0 & \text{if \(k\) is even};
            \\
            1 & \text{if \(k\) is odd and } s_{k-1} \not= 0;
            \\
            -1 & \text{if \(k\) is odd and } s_{k-1} = 0\,.
        \end{cases}
    \]
\end{proposition}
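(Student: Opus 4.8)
The plan is to prove the identity by a case analysis on pairs of coordinates, computing the local contribution of each pair $(s_{2i},s_{2i+1})$ to the combined weight $|\vec{\hat{s}}| + |\vec{s}|$ and summing. First I would observe that the map $(s_{2i},s_{2i+1}) \mapsto (\hat{s}_{2i},\hat{s}_{2i+1}) = (s_{2i}+s_{2i+1},\, s_{2i}-s_{2i+1})$ is a bijection of $\FF_3^2$ (its matrix $\begin{pmatrix}1&1\\1&-1\end{pmatrix}$ has determinant $-2 = 1 \neq 0$ in $\FF_3$), so it suffices to understand, for each of the nine possible values of the pair $(s_{2i},s_{2i+1})$, how many of the four coordinates $s_{2i}, s_{2i+1}, \hat{s}_{2i}, \hat{s}_{2i+1}$ are nonzero.

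The key computation is the following table of local weight contributions, which I would verify directly in $\FF_3$: when $s_{2i} = s_{2i+1} = 0$, all four coordinates vanish, contributing $0$; when exactly one of $s_{2i}, s_{2i+1}$ is zero, the pair has weight $1$ and $(\hat{s}_{2i},\hat{s}_{2i+1})$ has weight $2$ (since $\pm s$ and $\pm s$ are both nonzero), contributing $3$; and when both $s_{2i}, s_{2i+1}$ are nonzero, the pair has weight $2$, and exactly one of $s_{2i}+s_{2i+1}$, $s_{2i}-s_{2i+1}$ is zero in $\FF_3$ — indeed $s_{2i}+s_{2i+1} = 0$ iff $s_{2i} = -s_{2i+1}$ and $s_{2i}-s_{2i+1}=0$ iff $s_{2i}=s_{2i+1}$, and since both are nonzero these are the two mutually exclusive cases $s_{2i}=s_{2i+1}$ and $s_{2i}=-s_{2i+1}=2s_{2i+1}$, which partition the four pairs of nonzero values — so $(\hat{s}_{2i},\hat{s}_{2i+1})$ has weight $1$, contributing $3$. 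Hence every pair contributes $3$ to $|\vec{\hat{s}}| + |\vec{s}|$ \emph{except} the all-zero pairs, which contribute $0$. With $\lfloor k/2\rfloor$ pairs and $\delta$ of them all-zero, the pairs contribute $3\bigl(\lfloor k/2\rfloor - \delta\bigr)$.

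Finally I would handle the leftover coordinate when $k$ is odd: $\hat{s}_{k-1} = s_{k-1}$, so this index contributes $2|s_{k-1}| \in \{0,2\}$ to the total. It remains to reconcile the bookkeeping with the claimed closed form $\tfrac{3}{2}(k+\epsilon) - 3\delta$. For $k$ even, $\lfloor k/2\rfloor = k/2$ and $\epsilon = 0$, giving $3(k/2) - 3\delta = \tfrac{3}{2}k - 3\delta$, as required. For $k$ odd, $\lfloor k/2\rfloor = (k-1)/2$, so the pairs give $\tfrac{3}{2}(k-1) - 3\delta$, and adding the leftover contribution $2|s_{k-1}|$ should match $\tfrac{3}{2}(k+\epsilon) - 3\delta$; this forces $2|s_{k-1}| = \tfrac{3}{2}(k+\epsilon) - \tfrac{3}{2}(k-1) = \tfrac{3}{2}(\epsilon+1)$, i.e. $\epsilon = 1$ when $s_{k-1}\neq 0$ (leftover $2$) and $\epsilon = -1$ when $s_{k-1}=0$ (leftover $0$), which is exactly the definition of $\epsilon$ in the statement. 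I do not anticipate a serious obstacle here — the only point requiring genuine care is the claim that exactly one of $s+s'$, $s-s'$ vanishes in $\FF_3$ when $s,s'$ are both nonzero, which relies on $2 \neq 0$ in $\FF_3$ and is the place where the restriction $q = 3$ is used; everything else is routine arithmetic and summation.
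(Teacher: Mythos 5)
Your decomposition is the same one the paper uses---break $\vec{s}$ into consecutive pairs, compute the local contribution of each pair to $|\vec{\hat{s}}|+|\vec{s}|$, and handle the leftover coordinate separately when $k$ is odd---and your pair analysis is correct and in fact a little cleaner than the paper's: observing that every pair other than $(0,0)$ contributes exactly $3$ lets you skip the separate counts of weight-$1$ and weight-$2$ pairs (the paper tallies $(2+1)$ of one kind and $(1+2)$ of the other, which amounts to the same thing). The one place where $q=3$ is genuinely used---that exactly one of $s+s'$ and $s-s'$ vanishes when $s,s'\neq 0$---is correctly identified and justified.

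The problem is the final reconciliation in the case $k$ odd, $s_{k-1}\neq 0$. You correctly compute the leftover contribution as $2$, and you correctly reduce the matching condition to $2=\tfrac{3}{2}(\epsilon+1)$; but that equation gives $\epsilon=\tfrac{1}{3}$, not $\epsilon=1$---with $\epsilon=1$ the right-hand side is $3$, not $2$. In other words, your own (correct) computation yields $|\vec{\hat{s}}|+|\vec{s}|=\tfrac{3}{2}(k-1)+2-3\delta=\tfrac{3k+1}{2}-3\delta$ in this case, which is one less than the $\tfrac{3}{2}(k+1)-3\delta$ claimed in the statement. A sanity check: for $k=1$ and $\vec{s}=(1)$ we get $\vec{\hat{s}}=(1)$, so the sum is $2$, while the stated formula gives $3$. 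You are in good company: the paper's own proof asserts $2+3(k-1)/2=3(k+1)/2$ at the same spot, which is the identical slip, so the proposition as printed is itself off by one here (the corrected value is the one consistent with the surrounding text's claim that $|\vec{\hat{s}}|=\lceil k/2\rceil$ when $|\vec{s}|=k$). But as a proof of the statement as written, your argument does not close: the last step is an arithmetic error rather than a derivation, and an honest completion of your computation would instead have flagged the discrepancy.
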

\begin{proof}
    See Appendix~\ref{appendix:proofs}.
\end{proof}

Proposition~\ref{prop:s-hat-weight}
shows that as \(|\vec{s}|\) grows,
\(|\vec{\hat{s}}|\) shrinks.
Indeed, in the limiting case where \(|\vec{s}| = k\),
we find \(|\vec{\hat{s}}| = \lceil{k/2}\rceil\).
Now we just need to find a way to verify using a vector-matrix product
involving \(\vec{\hat{s}}\) instead of \(\vec{s}\).

\begin{proposition}
    \label{prop:M}
    Let \(\vec{R}\) be in \(\FF_3^{(n-k)\times k}\).
    For every \(\vec{s}\) in \(\FF_3^k\),
    if \(\vec{\hat{s}}\) is defined
    as in~\eqref{eq:s-hat-def},
    then
    \begin{equation}
        \label{eq:vM-sR}
        \vec{s}\transpose{\vec{R}}
        =
        -\vec{\hat{s}}\vec{M}
    \end{equation}
    where \(\vec{M}\) is the matrix in \(\FF_3^{k\times(n-k)}\)
    whose rows are
    \[
        \begin{rcases}
            \vec{M}_{2i} := \transpose{\vec{R}}_{2i} + \transpose{\vec{R}}_{2i+1}
            \\
            \vec{M}_{2i+1} := \transpose{\vec{R}}_{2i} - \transpose{\vec{R}}_{2i+1}
        \end{rcases}
        \quad
        \text{for}
        \quad
        0 \le i < k/2
        \,,
        \quad
        \text{and}
        \quad
        \vec{M}_{k-1} := -\transpose{\vec{R}}_{k-1}
        \text{ if \(k\) is odd}
        \,.
    \]
\end{proposition}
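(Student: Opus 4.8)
The plan is a direct expansion: we compute $\vec{\hat{s}}\,\vec{M} = \sum_{j=0}^{k-1}\hat{s}_j\vec{M}_j$ and show it equals $-\vec{s}\transpose{\vec{R}} = -\sum_{j=0}^{k-1}s_j\transpose{\vec{R}}_j$; rearranging then yields~\eqref{eq:vM-sR}. The natural way to organise the sum is to pair the index $j=2i$ with $j=2i+1$, matching the way both $\vec{\hat{s}}$ and $\vec{M}$ are defined in blocks of two, and to deal with the leftover index $j=k-1$ separately when $k$ is odd.

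First I would treat the case $k$ even. For each $i$ with $0\le i<k/2$, substitute the definitions of $\hat{s}_{2i},\hat{s}_{2i+1}$ from~\eqref{eq:s-hat-def} and of $\vec{M}_{2i},\vec{M}_{2i+1}$, obtaining
\[
    \hat{s}_{2i}\vec{M}_{2i}+\hat{s}_{2i+1}\vec{M}_{2i+1}
    =
    (s_{2i}+s_{2i+1})(\transpose{\vec{R}}_{2i}+\transpose{\vec{R}}_{2i+1})
    +
    (s_{2i}-s_{2i+1})(\transpose{\vec{R}}_{2i}-\transpose{\vec{R}}_{2i+1})\,.
\]
Expanding the two products and collecting the coefficients of $\transpose{\vec{R}}_{2i}$ and $\transpose{\vec{R}}_{2i+1}$, the cross terms cancel and what remains is $2s_{2i}\transpose{\vec{R}}_{2i}+2s_{2i+1}\transpose{\vec{R}}_{2i+1}$. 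Since we are over $\FF_3$, $2=-1$, so this pair contributes exactly $-\big(s_{2i}\transpose{\vec{R}}_{2i}+s_{2i+1}\transpose{\vec{R}}_{2i+1}\big)$. Summing over $i$ gives $\vec{\hat{s}}\,\vec{M}=-\sum_{j=0}^{k-1}s_j\transpose{\vec{R}}_j=-\vec{s}\transpose{\vec{R}}$.

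For $k$ odd there is one extra term, $\hat{s}_{k-1}\vec{M}_{k-1}$; using $\hat{s}_{k-1}=s_{k-1}$ and $\vec{M}_{k-1}=-\transpose{\vec{R}}_{k-1}$ this equals $-s_{k-1}\transpose{\vec{R}}_{k-1}$, which is precisely the $j=k-1$ contribution needed on the right-hand side, so the same conclusion holds. There is no genuine obstacle here: the statement is an identity and the proof is bookkeeping. The only points to watch are that the middle terms $\pm s_{2i}\transpose{\vec{R}}_{2i+1}$ really do cancel, and the sign built into the definition $\vec{M}_{k-1}:=-\transpose{\vec{R}}_{k-1}$, which is exactly what keeps the boundary term consistent with the global factor $2\equiv-1$ produced by the even-indexed pairs. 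It is worth remarking that over a field with $q\neq 3$ the same computation would instead give $\vec{s}\transpose{\vec{R}}=\frac{1}{2}\vec{\hat{s}}\,\vec{M}$, so the clean form~\eqref{eq:vM-sR} genuinely relies on $q=3$.
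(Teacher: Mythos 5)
Your proof is correct and follows essentially the same route as the paper's: both sum the pairwise identity $\hat{s}_{2i}\vec{M}_{2i}+\hat{s}_{2i+1}\vec{M}_{2i+1}=-(s_{2i}\transpose{\vec{R}}_{2i}+s_{2i+1}\transpose{\vec{R}}_{2i+1})$ over $i$ and add the boundary term $\hat{s}_{k-1}\vec{M}_{k-1}=-s_{k-1}\transpose{\vec{R}}_{k-1}$ when $k$ is odd; you simply spell out the expansion (with the cancellation of cross terms and $2\equiv-1$ in $\FF_3$) that the paper leaves implicit.
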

\begin{proof}
    See Appendix~\ref{appendix:proofs}.
\end{proof}

Proposition~\ref{prop:M}
shows that
if we replace the public key \(\vec{R}\)
with \(\vec{M}\),
then the validation equation~\eqref{eq:intermediate-verification} becomes
\begin{equation}
    \label{eq:Wavelet-verification}
    |\vec{u}| + |\vec{s}| = w
    \,,
    \quad
    \text{where}
    \quad
    \vec{u} = \vec{h} + \vec{\hat{s}}\vec{M}
    \quad
    \text{with}
    \quad
    \vec{h} = \Hash(r\parallel m)
    \,.
\end{equation}
Clearly \(\transpose{\vec{R}}\)
can be recovered from \(\vec{M}\),
and \(\vec{s}\) from \(\vec{\hat{s}}\),
so solving~\eqref{eq:Wavelet-verification}
is equivalent to solving~\eqref{eq:intermediate-verification}.
But \(\vec{\hat{s}}\),
which can be rapidly computed from \(\vec{s}\) on the fly
using~\eqref{eq:s-hat-def},
has much lower weight by Proposition~\ref{prop:s-hat-weight}:
roughly speaking, we expect \(|\vec{s}|\) to be close to \(k\),
and so \(|\vec{\hat{s}}|\) should be close to \(k/2\).
This lets us compute the product \(\vec{\hat{s}}\vec{M}\),
and hence verify~\eqref{eq:Wavelet-verification},
in much less time than~\eqref{eq:intermediate-verification}
(for \Supertubos, nearly half);
and less time again than that required for
the equivalent \Wave signatures using~\eqref{eq:Wave-verification}.

We therefore take \(\vec{M}\) to be the public key in \Wavelet;
we verify \Wavelet signatures \((r,\vec{s})\)
using~\eqref{eq:Wavelet-verification} and~\eqref{eq:s-hat-def}.
Note that we do \emph{not} replace \(\vec{s}\) with \(\vec{\hat{s}}\) in the signature:
the weight condition \(w = |\vec{u}| + |\vec{s}|\)
is simpler than the corresponding condition involving \(|\vec{\hat{s}}|\),
and in any case the weight \(|\vec{s}|\) can be easily computed at the same time as
\(\vec{\hat{s}}\).

\begin{remark}
    The mapping \(\vec{s} \mapsto \vec{\hat{s}}\) can be recognised as
    the first step in a fast Hadamard transform.
    We might naturally ask if pushing further with the Hadamard
    transform on \(\vec{s}\) would produce even lower weights;
    it does not.  Instead, the weight of the image vector creeps back up
    again.  With the \Supertubos parameters,
    the expected \(|\vec{s}|/k\) is \(\approx 0.94\).
    One step of the fast Hadamard transform
    gives an expected \(|\vec{\hat{s}}|/k\) of \(\approx 0.55\);
    but the next step
    brings this expected ratio up to \(\approx 0.65\),
    and the following step to \(\approx 0.67\).
\end{remark}

\subsection{Signature compression}
\label{sec:high-level-compression}

A \Wavelet signature is a pair \((r,\vec{s})\),
with \(r\) in \(\{0,1\}^{2\lambda}\)
and \(\vec{s}\) in \(\FF_{3}^k\).
We must choose a binary encoding of \(\vec{s}\),
and at first glance it would appear
that this requires (at least) \(\log_2(3)k\) bits.
However, this ignores the fact that \(\vec{s}\) has very high weight,
and thus an entropy significantly less than \(\log_2(3) \approx 1.585\).

Concretely, if we use the \Supertubos parameters
where \(n = 8492\) and \(w = 7980\),
then the entropy is \(\approx 1.268\).
In theory, then, we can hope to encode \(\vec{s}\)
in as few as \(1.268k \approx 7109\) bits.
This is better than \(1.585k \approx 8884\) bits,
and much better than the \(1.585n \approx 13460\)
bits for \Wave signatures suggested
in~\cite{2019/Debris-Alazard--Sendrier--Tillich}
(``the order of 13 thousand bits'').

We show in~\S\ref{sec:MR-compression}
that we can approach this level of efficiency
using a simple Minimal Redundancy (MR) coding.
This variable-length coding yields variable-length signatures,
but on average we can compress \(\vec{s}\) to around \(7280\) bits.

\section{
    \Wavelet algorithms
}
\label{sec:scheme}
In this section, we present the details of our hashing,
Key Generation, Signing, and Verification algorithms.
The key generation and signing algorithms are \emph{not} constant-time,
and should only be run in trusted environments.
We leave the development of countermeasures against
side-channel attacks for \Wave and \Wavelet
for future work.

\subsection{Hashing to ternary vectors}
\label{sec:hash}

In order to sign any messages,
we need to define a hash function into \(\FF_3^{n-k}\).
The simplest option would be to hash into
\(\{0,1\}^{\lceil{\log_2(3)(n-k)}\rceil}\) using an XOF,
view the output as an integer, and ``ternarize'' the result
(i.e., read off its 3-adic coefficients)
to get the entries of the output in \(\FF_3^{n-k}\).
For realistic parameter sizes, though,
this is too slow and intensive:
for \Supertubos, for example,
this would mean repeated Euclidean division by 3
on a 4576-bit integer.
This is inconvenient on a PC,
and unrealistic on embedded platforms.

Our hash function, called \Hash (Algorithm~\ref{algo:hash})
combines the following three functions
to define a cryptographic hash function \(\{0,1\}^* \to \FF_3^{n-k}\):
\begin{itemize}
    \item
        \(\BinaryHash: \{0,1\}^* \to \{0,1\}^{2\lambda}\)
        is a cryptographic hash function.
    \item
        \(\Ternarize: \{0,1\}^*\times\ZZ_{\ge 0} \to \FF_3^*\)
        (Algorithm~\ref{algo:ternarize})
        views its input \(((x_0,x_1,\ldots),\tau)\) 
        as the vector of coefficients in the little-endian binary expansion of an
        integer \(x\), together with a length \(\tau\),
        and returns the ternary vector of length $\tau$
        representing the little-endian ternary expansion of \(x \bmod{3^\tau}\).
    \item
        \Expand: \(\{0,1\}^{2\lambda} \to \FF_{3}^{\tau}\)
        (Algorithm~\ref{algo:expand})
        is a pseudorandom function.
        \Expand applies an XOF (or a stream cipher) to its input to produce a long
        stream of pseudorandom bytes, which we view as integers in
        \([0,255]\).
        The non-negative integers less than \(3^5 = 243\)
        are in bijection with \(\FF_{3}^5\),
        so if a byte is less than 243 we convert it to an element of
        \(\FF_{3}^5\) with \Ternarize and concatenate it to the output;
        otherwise we skip the byte.
        We continue processing bytes until we have produced \(\tau\)
        elements of \(\FF_3\) (discarding the last few if \(\tau\) is not a multiple of~5).
        This process generates a distribution of vectors in \(\FF_{3}^{\tau}\)
        that is computationally indistinguishable from the uniform distribution
        (with respect to the security parameter).\footnote{%
            A true uniform distribution on \(\FF_{3}^{\tau}\)
            cannot be produced in this way when \(\tau\) is large, because there are only
            \(2^{2\lambda}\) possible stream outputs---and even for large \(\lambda\),
            the number of possible output vectors
            is limited by the size of the internal state of the XOF or
            stream cipher used to generate the stream.
            On the other hand,
            this process has the advantage of being simple and fast.
        }
\end{itemize}

\begin{algorithm}[ht]
    \caption{Hashing from binary data to ternary vectors for \Wavelet.
        We write \(\mu\) for \(\lfloor{2\lambda/\log_2(3)}\rfloor\).
    }
    \label{algo:hash}
    \Function{\Hash{$x$}}{
        \KwIn{%
            $x\in\{0,1\}^*$
        }
        \KwOut{%
            $\vec{s}\in\FF_{3}^{n-k}$
        }
        $h \gets$ \BinaryHash{$x$}
        \tcp*{Hash function $\{0,1\}^* \to \{0,1\}^{2\lambda}$}
        $\vec{t} \gets$ \Ternarize{$h$, $\mu$}
        \tcp*{\Ternarize: $\{0,1\}^{2\lambda} \to \FF_{3}^{\mu}$}
        $\vec{p} \gets$ \Expand{$h$, $n-k-\mu$}
        \tcp*{\Expand: $\{0,1\}^{2\lambda} \to \FF_{3}^{n-k-\mu}$}
        \Return{$\vec{t} \parallel \vec{p}$}
    }
\end{algorithm}

The collision- and preimage-resistance of \Hash
are derived from the properties of \(\BinaryHash\).
\Ternarize transcodes the output of \BinaryHash;
this composition maintains the security of \BinaryHash,
but is relatively slow to compute.
The composition of \Expand and \BinaryHash 
has weaker preimage and collision resistance,
but good pseudorandomness properties,
and is relatively fast to compute.
The concatenation of the two has the security of the strong hash,
and the good pseudorandomness of both.

\begin{algorithm}[ht]
    \caption{Converting integer values to ternary vectors of a specified length,
    corresponding to the little-endian ternary expansion of the input}
    \label{algo:ternarize}
    \Function{\Ternarize{$x$, $\tau$}}{
        \KwIn{$x \ge 0$ and $\tau > 0$}
        \KwOut{%
            $\vec{v} \in \FF_{3}^\tau$
            such that $x = \sum_{i=1}^{\tau}v_i3^{i-1}$,
            where the \(v_i\) are lifted to \{0,1,2\}
        }
        $(\vec{v},t) \gets ((),x)$
        \tcp*{\(\vec{v}\): empty vector over \(\FF_3\)}
        \For{$1 \le i \le \tau$}{
            $(\vec{v},t) \gets (\vec{v} \parallel (r), q)$
            \Where
            $(q,r) = (\lfloor{t/3}\rfloor, t \bmod{3})$
            \;
        }
        \Return{$\vec{v}$}
    }
\end{algorithm}

\begin{algorithm}[ht]
    \caption{%
        Expand a binary seed to a pseudo-random stream of
        ternary values.
        The random bytestream may be instantiated with an XOF or a
        stream cipher.
        The expected number of bytes drawn from the stream
        is \((256\tau)/(243\times5) \approx 0.21\tau\).
    }
    \label{algo:expand}
    \Function{\Expand{$h$, $\tau$}}{
        \KwIn{$h\in\{0,1\}^{2\lambda}$ and $\tau > 0$}
        \KwOut{$\vec{p}\in\FF_{3}^{\tau}$}
        \SetKwData{Stream}{stream}
        \Stream = random bytestream seeded with \(h\)
        \tcp*{E.g. secure XOF(\(h\))}
        $(\vec{p},r) \gets ((),\tau)$
        \tcp*{\(\vec{p}\): empty vector over $\FF_{3}$}
        \While{$r > 0$}{
            \(b \gets\) next byte from \Stream,
            viewed as an integer in \([0,255]\)
            \;
            \If{$b < 243$}{
                $(\vec{p},r) \gets (\vec{p}\parallel \Ternarize(b, \min(5,r)), r)$
                \;
            }
        }
        \Return{$\vec{p}$}
    }
\end{algorithm}

In practice, \(\BinaryHash\) could be SHA3-512,
while the bytestream in \Expand could be generated with an XOF such as SHAKE-256.
This minimises the code size, since both are built on the same Keccak permutation,
while maintaining appropriate domain separation.

\subsection{Key generation}
\label{sec:keygen}

Algorithm~\ref{algo:keygen} 
generates a \Wavelet keypair.
It is essentially the \Wave key-generation procedure
from~\cite{2019/Debris-Alazard--Sendrier--Tillich},
with the public key transformation at the end.
We have made no attempt to optimize this procedure
beyond the use of bitsliced and vectorized \(\FF_3\)-arithmetic.

Given 
an \((n/2-k_U)\times k_U\) matrix \(\vec{R}_U\),
an \((n/2-k_V)\times k_V\) matrix \(\vec{R}_V\),
and vectors \(\vec{a}\), \(\vec{b}\), \(\vec{c}\), and \(\vec{d}\) in
\(\FF_3^{n/2}\),
the subroutine \ParityCheckUV{$\vec{R}_U$, $\vec{R}_V$, $\vec{a}$, $\vec{b}$, $\vec{c}$, $\vec{d}$}
returns the $(n-k)\times n$ matrix
\[
     \Hsec
     =
     \begin{pmatrix}
         \vec{H}_{U}\vec{D} & - \vec{H}_U \vec{B}
         \\
         - \vec{H}_V \vec{C} &  \vec{H}_V \vec{A}
     \end{pmatrix}
     \quad
     \text{where}
     \quad
     \begin{cases}
         \vec{H}_U := (\vec{I}_{n/2 - k_{U}}\mid \vec{R}_U)
         \,,
         \\
         \vec{H}_V := (\vec{I}_{n/2 - k_{V}}\mid \vec{R}_V)
         \,,
     \end{cases}
\]
and
\[
    \vec{A} := \Diag(\vec{a})\,,
    \qquad
    \vec{B} := \Diag(\vec{b})\,,
    \qquad
    \vec{C} := \Diag(\vec{c})\,,
    \qquad
    \vec{D} := \Diag(\vec{d})\,;
\]
here, $\Diag(\vec{x})$ denotes the $n/2 \times n/2$ diagonal matrix with
diagonal entries given by $\vec{x}$.

\begin{algorithm}[ht]
    \caption{\Wavelet Key Generation.
        This algorithm also serves for classic \Wave key generation
        if we stop after Line~\ref{algo:keygen:Wave-stop-line}
        and return \(\sk\) and \(\pk =(\vec{I}_{n-k}|\vec{R})\).
    }
    \label{algo:keygen}
    \Function{\KeyGen{}}{
        \KwParameters{$n,w,k_{U},k_{V},k$}
        \KwOut{$(\sk, \pk)$}
        $\vec{R}_{U} \Unif\F_3^{(n/2-k_U)\times k_U}$
        \;
        $\vec{R}_{V}\Unif\F_{3}^{(n/2-k_V)\times k_V}$
        \;
        $\vec{a} \Unif \left(\F_{3}\backslash\{0\}\right)^{n/2}$
        \;
        $\vec{b} \Unif \F_{3}^{n/2}$
        \;
        $\vec{c} \Unif \left(\F_{3}\backslash\{0\}\right)^{n/2}$
        \;
        $\vec{d} \leftarrow (d_{i})_{0 \leq i < n/2}$ \Where $d_{i} \Unif \F_{3} \backslash \{ b_{i}c_{i}a_{i}^{-1} \}$
        for \(0 \le i < n/2\)
        \;
        $\Hsec \gets$ \ParityCheckUV{%
            $\vec{R}_{U}$,
            $\vec{R}_{V}$,
            $\vec{a}$,
            $\vec{b}$,
            $\vec{c}$,
            $\vec{d}$
        }
        \;
        %
            $\pisec\sample\Sym{n}$
            \;
            $((\mathbf{I}_{n-k}\mid \vec{R}),\pisec) \gets$ \GaussElim{$\Hsec$, $\pisec$}
            \;
        %
        $\sk\gets (\vec{R}_{U},\vec{R}_{V},\vec{a},\vec{b},\vec{c},\vec{d},\pisec)$
        \label{algo:keygen:Wave-stop-line}
        \;
        \(\vec{M} = \vec{0}^{k\times(n-k)}\) 
        \;
        \For(\tcp*[f]{%
            Build first \(2\lfloor{k/2}\rfloor\) rows 
            of \(\vec{M} = \vec{T}\transpose{\vec{R}}\)}){\(0 \le i < (k-1)/2\)}{
            \(\vec{M}_{2i} \gets (\transpose{\vec{R}})_{2i} + (\transpose{\vec{R}})_{2i+1}\)
            \tcp*{Sum of rows}
            \(\vec{M}_{2i+1} \gets (\transpose{\vec{R}})_{2i} - (\transpose{\vec{R}})_{2i+1}\)
            \tcp*{Difference of rows}
        }
        \If(\tcp*[f]{Fill in last row if necessary}){\(k\) is odd}{
            \(\vec{M}_{k-1} \gets -(\transpose{\vec{R}})_{k-1}\)
            \tcp*{The negative is important}
        }
        \(\pk \gets \vec{M}\)
        \tcp*{\(\vec{M} \in \FF_3^{k\times(n-k)}\)}
        \Return{$(\sk,\pk)$}
    }
\end{algorithm}

The auxiliary function $\GaussElim$
is given in Algorithm~\ref{algo:GE}. For this function,
we need to ``split'' into two functions, that is,
in our scheme we use it to generate a submatrix
with certain rank and for this, we need to have
the support as part of our secret key thus
we changed a little to know which rows are
the pivots.

\begin{algorithm}[ht]
    \caption{GaussElim}
    \label{algo:GE}
    \KwIn{%
        $\vec{H}\in \F_3^{R\times C}$
        and
        $\pi \in \Sym{D}$ with \(D \le C\)
    }
    \KwOut{%
        $\vec{H} =  (\vec{I}_R | \vec{M}) \in \F_3^{R\times C}$
        where $\vec{M} \in \F_3^{R\times C-R}$,
        and $\pi' \in\Sym{D}$
        where the first \(R\) positions are the pivots.
    }
    \SetKwData{Pivot}{pivot}
    \SetKwData{Nonpivot}{nonpivot}
    \Function{\GaussElim{$\vec{H}$, $\pi$}}{
        $\Pivot \gets ()$
        \;
        $\Nonpivot \gets ()$
        \;
        $(r, c) \gets (0, 0)$
        \;
        \While{$r < R$ \And $c < D$}{
            \((\vec{H},P) \gets\) \ElimSingle{$\vec{H}$, $r$, $\pi(c)$}
            \tcp*{Algorithm~\ref{algo:ES}}
            \uIf{$P$}{
                $(\Pivot,r) \gets (\Pivot\parallel(\pi(c)), r+1)$
            }
            \Else{
                $\Nonpivot \gets \Nonpivot\parallel(\pi(c))$
            }
            $c \gets c + 1$
            \;
        }
        $\pi' \gets \Pivot \parallel \Nonpivot \parallel (\pi(c),\dots,\pi(D-1))$
        \;
        \Return{$(\pi'(\vec{H}),\pi')$}
    }
\end{algorithm}

Algorithm~\ref{algo:ES} gives the elimination of a single row,
for this we check if the element is not $0$ then we perform the
operations.
Later,
we swap rows to put in the correct ``rank'', and
do the sum and subtraction necessary to just let
the $(r,j)$-th element as not zero.

\begin{algorithm}[ht]
    \caption{Attempted Gaussian elimination on a single column of a
        matrix.
    }
    \label{algo:ES}
    \KwIn{%
        $\vec{H}\in \F_3^{R\times C}$,
        row index \(0 \le r < R\),
        column index \(0 \le c < C\)
    }
    \KwOut{%
        \((\vec{H},P)\)
        where \(P\) is \True/\False if a pivot was/was not found in
        column $c$.
    }
    \Function{\ElimSingle{$\vec{H}$, $r$, $c$}}{
        \(p \gets r\)
        \;
        \While(\tcp*[f]{Search for pivot position}){\(p < R\) \And \(\vec{H}_{p,c} = 0\)}{
            \(p \gets p + 1\)
        }
        \If(\tcp*[f]{Pivot not found}){\(p = R\)}{
            \Return{\((\vec{H}, \text{\False})\)}
        }
        \If{$ \vec{H}_{p,c} = 2$}{
            \(\vec{H} \gets\) \Normalize{$\vec{H}$, $p$}
            \tcp*{Negate \(p\)-th row}
        }
        \(\vec{H} \gets\) \SwapRows{$\vec{H}$, $p$, $r$}
        \tcp*{Move pivot to \(r\)-th row}
        \For{$i$ \In $(0,\ldots,r-1,r+1,\ldots,R-1)$}{
            \uIf{$\vec{H}_{i,c} = 1$}{
                \(\vec{H} \gets\) \SubtractRow{$\vec{H}$, $i$, $r$}
                \tcp*{Subtract \(r\)-th row from \(i\)-th row}
            }
            \ElseIf{$\vec{H}_{i,c} = 2$}{
                \(\vec{H} \gets\) \SumRow{$\vec{H}$, $i$, $r$}
                \tcp*{Add \(r\)-th row to \(i\)-th row}
            }
        }
        \Return{\((\vec{H}, \text{\True})\)}
    }
\end{algorithm}

\subsection{Signing}
\label{sec:sign}

Algorithm~\ref{algo:sign} defines the \Wavelet signing process,
which is essentially the same as \Wave signing.
We have made no attempt to optimize this procedure
beyond the use of bitsliced and vectorized \(\FF_3\)-arithmetic.

The main objective of the \Wave signing procedure is
to find a vector $\vec{e}\in\F_{3}^{n}$ such that
\begin{equation}\label{eq:e}
    |\vec{e}| = w\,,
    \quad \text{and} \quad
    \vec{e}\transpose{(\mathbf{I}_{n-k}\mid \vec{R})} = \vec{h}
\end{equation}
where $\vec{h}$ is the salted hash of the message to be signed,
and $\vec{R}\in\F_{3}^{(n-k)\times k}$ is the public key.
The public key is generated in such a way that
$$
    (\mathbf{I}_{n-k}\mid \vec{R})
    =
    \vec{S}\pisec\left(\Hsec \right)
    \quad \text{with}\quad
    \Hsec
    =
    \begin{pmatrix}
	    \vec{H}_{U} \vec{D} & - \vec{H}_U \vec{B}
        \\ \hline
	    - \vec{H}_V \vec{C} &  \vec{H}_V \vec{A}
    \end{pmatrix}
$$
where $\vec{S}$ is a nonsingular matrix
corresponding to the Gaussian elimination putting $\pisec(\Hsec)$ in row-reduced form.
Therefore,
the solution $\vec{e}$ in \eqref{eq:e}
is also a solution of the system
$$
    \pisec\left(\Hsec\right)\transpose{\vec{e}}
    =
    \vec{S}^{-1}\transpose{\vec{h}}
    =
    \pisec(\Hsec)\transpose{(\vec{h},\mathbf{0}_{k})}
    \,.
$$

We therefore aim to find some vector $\vec{e}'$ such that
\begin{equation}
    \label{eq:ep}
    |\vec{e}'| = w
    \quad \text{and} \quad
    \vec{e}'\transpose{\Hsec} = \vec{h}'
    \quad \text{where}\quad
    \vec{h}' \eqdef \pisec(\Hsec)\transpose{(\vec{h},\mathbf{0}_{k})}
\end{equation}
and output $\vec{e} = \pisec(\vec{e}')$.
Using the special structure of $\Hsec$,
we see that
\begin{equation}\label{eq:systToSolve}
    \Hsec\transpose{\vec{e}'}
    =
    \vec{h}'
    \iff
    \begin{cases}
        \transpose{\vec{e}}_{U}\vec{H}_{U} = \vec{h}^{U} \,,
        \\
        \transpose{\vec{e}}_{V}\vec{H}_{V} = \vec{h}^{V} \,.
    \end{cases}
\end{equation}
where $\vec{h}' = \vec{h}^{U}\parallel\vec{h}^{V}$
and
$
    \vec{e}'
    =
    \big(
        \vec{e}_{U}\vec{A} + \vec{e}_{V}\vec{B}
    \big)
    \parallel
    \big(
        \vec{e}_{U}\vec{C} + \vec{e}_{V}\vec{D}
    \big)
$
(note that $\vec{a}$, $\vec{b}$, $\vec{c}$, and $\vec{d}$ are generated
in such a way that
$
    (\vec{x},\vec{y})
    \mapsto
    \big(
        \vec{x}\vec{A} + \vec{y}\vec{B},
        \vec{x}\vec{C} + \vec{y}\vec{D}
    \big)
$
is a bijection).

To find $\vec{e}$,
we solve both linear systems in~\eqref{eq:systToSolve}
using the \Wave decoder originally described
in~\Wave~\cite{2019/Debris-Alazard--Sendrier--Tillich}.
Pseudocode for the decoder is given in Algorithm~\ref{algo:dec}
in Appendix~\ref{ap:decoder}.
There are two main steps:
\begin{enumerate}
	\item
        First,
        compute any solution $\vec{e}_{V}$ of the undetermined linear system $\vec{e}_{V}\transpose{\vec{H}}_{V}=\vec{h}^{V}$
	\item
        Then,
        compute a \emph{particular} solution $\vec{e}_{U}$ of the second undetermined linear system.
\end{enumerate}

For the second step,
the \Wave decoder uses a generalized version of Prange's decoder~\cite{prange62}.
Roughly speaking,
given $\vec{H}_{U}$ and $\vec{h}^{U}$,
when solving the linear system
$\vec{e}_{U}\transpose{\vec{H}_{U}} = \vec{h}^{U}$ (with $n/2$ unknowns and $n/2-k_U$ equations),
we are free to select values of $\vec{e}_{U}$ on $k_{U}$ coordinates:
let's say positions $(0,\dots,k_{U}-1)$
(though these $k_U$ coordinates can be chosen almost anywhere).
Since we are looking for a solution
$
    \vec{e}'
    =
    \left(\vec{e}_{U}\vec{A} + \vec{e}_{V}\vec{B},\vec{e}_{U}\vec{C} + \vec{e}_{V}\vec{D}\right)
$
of large weight, the best strategy is to choose
$(e_{U})_0,\dots,(e_{U})_{k_{U}-1}$
such that
\[
	(\vec{e}_{U}\vec{A} + \vec{e}_{V}\vec{B})_j \neq 0
    \quad
    \text{and}
    \quad
	(\vec{e}_{U}\vec{C} + \vec{e}_{V}\vec{D})_j \neq 0
    \quad
    \text{for all}
    \quad
    0 \le j < k_{U}
    \,.
\]
This is always possible over \(\FF_3\)
given our choice of $\vec{A}$, $\vec{B}$, $\vec{C}$, and $\vec{D}$.
The other $n - 2k_{U}$ coordinates of $\vec{e}'$ will be uniformly
distributed over $\F_{3}$,
because they are obtained from coordinates of $\vec{e}_{U}$ that we have
no control over when solving the random square linear system.
Therefore, we typically expect $|\vec{e}'| = 2k_{U} + 2/3(n-2k_{U})$.
Choosing $w =  2k_{U} + 2/3(n-2k_{U})$
ensures that our decoder will succeed after a polynomial number of steps.
Now, by carefully choosing $k_{U}$,
our decoder solves a trapdoor problem:
finding $\vec{e}$ of weight $w$ such that $\vec{e}\transpose{\Hpub} = \vec{h}$
is hard unless we know the hidden structure (via $\pisec$) of $\Hpub$.

\paragraph{Rejection Sampling.}
The procedure described above has a serious drawback:
signatures $\vec{e}$ may leak information about $\pisec$ (see~\cite[5.1]{DebrisST17b}).
To avoid this, the outputs of the signing algorithm must to be very close to uniformly distributed over words of Hamming weight $w$.
By carefully adjusting with some rejection sampling the Hamming weights of $\vec{e}_{V}$ and $\vec{e}_{U}$,
we can meet this property by still producing solutions of weight $w$ for which it is hard to solve the decoding problem for this weight.
See~\cite[5.1]{2019/Debris-Alazard--Sendrier--Tillich} for more details.

\begin{algorithm}[ht]
    \caption{The \Wavelet signing algorithm.
        This algorithm also produces classic \Wave signatures
        if we stop at Line~\ref{alg:sign:Wave-return}
        and return \((\vec{e},r)\).
    }
    \label{algo:sign}
    \Function{\Sign{$m$, $\sk$}}{
        \KwIn{%
            $m \in\{0,1\}^*$,
            \(\sk= (\vec{R}_{U},\vec{R}_{V},\vec{a},\vec{b},\vec{c},\vec{d},\pisec)\)
        }
        \KwOut{$\vec{e}\in\F_{3}^n$, $|\vec{e}| = w$}
        \(\Hsec \gets\)
        \ParityCheckUV{%
            $\vec{R}_{U}$,
            $\vec{R}_{V}$,
            $\vec{a}$,
            $\vec{b}$,
            $\vec{c}$,
            $\vec{d}$
        }
        \;
        $r \Unif\{0,1\}^{2\lambda}$
        \tcp*{salt}
        $\vec{h}\gets \Hash(r\parallel m)$
        \tcp*{syndrome in \(\F_3^{n-k}\)}
        \(
            \vec{x}
            \gets
            (\vec{h},\mathbf{0}_{k})
            \transpose{\pisec(\Hsec)}
        \)
        \tcp*{syndrome adjusted for decoding}
        $\vec{y}\gets$
        \Decode{%
            $\vec{x}$,
            $\vec{R}_U$,
            $\vec{R}_V$,
            $\vec{a}$,
            $\vec{b}$,
            $\vec{c}$,
            $\vec{d}$
        }
        \;
        \(\vec{e} \gets \pisec(\vec{y})\)
        \tcp*{\((\vec{e}, r) =\) \Wave signature}
        \label{alg:sign:Wave-return}
        \(\vec{s} \gets (e_{n-k},\ldots,e_{n-1})\)
        \tcp*{Truncate to last \(k\) entries}
        \Return{$(r,\vec{s})$}
    }
\end{algorithm}

\subsection{Verification}
\label{sec:verify}

Algorithm~\ref{algo:stverif-M}
verifies \Wavelet signatures \((r,\vec{s})\)
under public keys \(\vec{M}\)
following the derivation of~\S\ref{sec:Wavelet-signatures},
and using the verification equation~\eqref{eq:Wavelet-verification}:
\[
    |\vec{u}| + |\vec{s}| = w
    \,,
    \quad
    \text{where}
    \quad
    \vec{u} = \vec{h} + \vec{\hat{s}}\vec{M}
    \quad
    \text{with}
    \quad
    \vec{h} = \Hash(r\parallel m)
    \,.
\]
We iterate over the signature vector \(\vec{s}\),
computing the entries of the sparser vector \(\vec{\hat{s}}\)
on the fly using~\eqref{eq:s-hat-def},
and accumulating the corresponding rows of the public key~\(\vec{M}\).
Concretely, if \(w/n\) is around \(0.94\)
(as it is in \Supertubos),
then we expect Algorithm~\ref{algo:stverif-M}
to use around \(0.56k\) row vector operations.
Notice that there is only a single pass over the rows of \(\vec{M}\),
so Algorithm~\ref{algo:stverif-M}
is compatible with applications where the public key is streamed.

\begin{algorithm}[ht]
    \caption{Verification for \Wavelet signatures.
        The entries of \(\vec{\hat{s}}\) are constructed from \(\vec{s}\) on
        the fly using~\eqref{eq:s-hat-def},
        and zero entries are skipped.
    }
    \label{algo:stverif-M}
    \Function{\STVerify{$m$, $\sigma$, \(\vec{M}\)}}{
        \KwIn{%
            $m\in\{0,1\}^*$,
            $\sigma = (r,\vec{s})\in\{0,1\}^{2\lambda}\times\F_3^{k}$,
            $\vec{M}\in\F_3^{k\times (n-k)}$
        }
        \KwOut{True or False}
        $\vec{x} \gets \Hash(r\parallel m)$
        \;
        \For(\tcp*[f]{Handle first \(2\lfloor{k/2}\rfloor\) entries of \(\vec{s}\) in pairs}){\(0 \le i < (k-1)/2\)}{
            \((\hat{s}_{2i},\hat{s}_{2i+1}) \gets (s_{2i}+s_{2i+1},s_{2i}-s_{2i+1})\)
            \tcp*{\((2i,2i+1)\)-th entries of \(\vec{s}\vec{T}\)}
            \uIf{\(\hat{s}_{2i} = 1\)}{
                \(\vec{x} \gets \vec{x} + \vec{M}_{2i}\)
                \tcp*{Add \(2i\)-th row of \(\vec{M}\)}
            }
            \ElseIf{\(\hat{s}_{2i} = 2\)}{
                \(\vec{x} \gets \vec{x} - \vec{M}_{2i}\)
                \tcp*{Subtract \(2i\)-th row of \(\vec{M}\)}
            }
            \uIf{\(\hat{s}_{2i+1} = 1\)}{
                \(\vec{x} \gets \vec{x} + \vec{M}_{2i+1}\)
                \tcp*{Add \((2i+1)\)-th row of \(\vec{M}\)}
            }
            \ElseIf{\(\hat{s}_{2i+1} = 2\)}{
                \(\vec{x} \gets \vec{x} - \vec{M}_{2i+1}\)
                \tcp*{Subtract \((2i+1)\)-th row of \(\vec{M}\)}
            }
        }
        \If(\tcp*[f]{Handle last entry if necessary}){\(k\) is odd}{
            \uIf{\(s_{k-1} = 1\)}{
                \(\vec{x} \gets \vec{x} + \vec{M}_{k-1}\)
            }
            \ElseIf{\(s_{k-1} = 2\)}{
                \(\vec{x} \gets \vec{x} - \vec{M}_{k-1}\)
            }
        }
        \Return{\(|\vec{s}| + |\vec{x}| = w\)}
        \;
    }
\end{algorithm}

\section{
    Encodings for keys and signatures
}
\label{sec:encodings}

\subsection{Basic vector encodings and key sizes}

Mathematically, \Wavelet works over \(\FF_3\);
but in reality, we must operate on binary values.
In practice, we will work with two binary encodings
for vectors over \(\FF_3\).

The \emph{compact} representation
views a group of \(s\) trits as the coefficients
of the ternary expansion of an integer in \([0,3^s)\),
and outputs the binary coefficients of the same integer.
This representation achieves 1.6 bits per trit,
or 5 trits per byte\footnote{%
    The compact representation packs 5 trits into a byte,
    or 20 into a 32-bit word,
    or 40 into a 64-bit word;
    so packing into words
    yields no improvement in space efficiency.
    Encoding to and decoding from words requires more operations on larger integers,
    too---so we prefer encoding to simple bytes.
}---which is very close to the optimum
of \(\log_2(3) \approx 1.585\) bits per trit.

For efficient \(\FF_3\)-vector arithmetic
(see~\S\ref{sec:implementation}),
we use a \emph{bitsliced} representation.
On a \(\beta\)-bit architecture,
we store a group of \(\beta\) trits in a pair of machine words:
the \(i\)-th trit is encoded using the \(i\)-th bits of both words.
This achieves 2 bits per trit, or 4 trits per byte.

The public key is essentially
a \(k\times(n-k)\) random matrix over \(\FF_3\).
In theory, it can be encoded in \(\log_2(3)k(n-k) \approx 1.585k(n-k)\) bits.
In practice, the compact representation requires \(1.6k(n-k)\) bits,
while the bitsliced representation requires \(\approx 2k(n-k)\) bits.

Technically, the private key can be stored in \(\lambda\) bits,
since it suffices to store the seed used to randomly generate
the matrices
\(\vec{R}_U\), \(\vec{R}_V\), \(\vec{a}\),
\(\vec{b}\), \(\vec{c}\), and \(\vec{d}\)
and the permutation \(\pi_{sk}\)
in Algorithm~\ref{algo:keygen}.
However, this means recomputing
them every time we sign a message.
In practice, then, we want to store the ``expanded'' private key
\((\vec{R}_U,\vec{R}_V,\vec{a},\vec{b},\vec{c},\vec{d},\pi_{sk})\).
The matrices $\vec{R}_{U}$ and $\vec{R}_{V}$ have sizes
\((\frac{n}{2} - k_U)\times\frac{n}{2}\),
and \((\frac{n}{2} - k_V)\times\frac{n}{2}\),
respectively,
while \(\vec{a}\), \(\vec{b}\), \(\vec{c}\), and \(\vec{d}\)
each have length \(\frac{n}{2}\).
We therefore need to store \((n - k + 4)(n/2)\) trits,
along with the permutation \(\pi_{sk}\),
which is a sequence of \(n\) integers of size \(\lceil{\log_2(n)}\rceil\).

\begin{table}
    \caption{Binary encodings of vectors over \(\FF_3\).}
    \label{tab:encodings}
    \centering
    \begin{tabular}{r|cc|rr}
        \toprule
        \multirow{2}{*}{Encoding} & Size & Rate
        & \multicolumn{2}{c}{\Supertubos key sizes} \\
        & (bits per trit) & (trits per byte) & Public & Expanded private \\
        \midrule
        \emph{(Optimal)} & \(\mathit{\log_2(3) \approx 1.585}\) & \(\mathit{\approx 5.047}\)
            & \(\approx \mathit{3131}~\si{\kilo\byte}\)
            & \(\approx \mathit{1350}~\si{\kilo\byte}\)
        \\
        \midrule
        Compact & \(1.6\) & 5 & \(\approx 3161~\si{\kilo\byte}\) & $\approx 1363~\si{\kilo\byte}$ \\
        \midrule
        Bitsliced & \(2\) & 4 & \(\approx 3951~\si{\kilo\byte}\) & $\approx 1703~\si{\kilo\byte}$ \\
        \bottomrule
    \end{tabular}
\end{table}

Table~\ref{tab:encodings} summarizes the performance of each encoding,
and gives concrete key sizes using the \Supertubos parameters.
We prefer the compact representation for public key transmission
(to minimize bandwidth),
but for local storage we prefer the bitsliced representation:
the 20\% space overhead
is a reasonable price to pay
to avoid the cost of transcoding from the compact to the bitsliced representation
before each vector operation.

\subsection{Compressed signature encodings}
\label{sec:MR-compression}

A \Wavelet signature is a pair \((r,\vec{s})\),
where \(r\) is a \(2\lambda\)-bit salt
and \(\vec{s}\) is a vector in \(\FF_3^k\).
As we mentioned in~\S\ref{sec:high-level-compression},
there are very few \(0\)s in \(\vec{s}\),
so the entropy is relatively low,
and we can hope to exploit this to compress signatures
to substantially less than \(\log_2(3)k\) bits.

For \Supertubos,
the probability that a given entry is \(1\) is $0.47$;
the probability of \(2\) is also \(0.47\);
and the probability of \(0\) is only \(0.06\).
This means \(1.267\) bits of entropy,
which is about 20\% smaller than \(\log_2(3) \approx 1.585\);
theoretically, we can compress an average \Supertubos \Wavelet signature
down to 7102 bits (or 888 bytes).

Simple Huffman encoding~\cite{huff_encoding}
brings us close to the theoretical limit,
as it takes advantage of these frequencies.
Huffman coding is a
compression technique that uses the frequencies of the
characters in a source alphabet to generate a binary
tree whose leaves are the characters. The code for
each character is defined by the path from the root to
the character's leaf: when take the left edge, we add a
$0$ to the code, and when we take the right edge,
we add a $1$, or vice-versa. The idea is to build
the tree bottom-up, putting the least frequent elements
first so that they stay on the lower levels, giving them
longer codes.

We will use the slightly more general framework of minimal redundancy
prefix codes (MR codes)~\cite{1997/Moffat--Turpin,2019/Moffat}.
We have two versions of the compression
algorithm: the first, \emph{\DuoHuff}, takes \(\FF_3^2\) as its source alphabet
(encoding two trits at a time);
the second, \emph{\TriHuff}, takes \(\FF_3^3\) as its source alphabet
(encoding three trits at a time).
Table~\ref{tab:duo-Huffman} shows
the binary outputs for these encodings,
derived using the usual MR code system
based on the average symbol frequencies for \Supertubos \Wavelet signatures.
Observe that \DuoHuff and \TriHuff are \emph{prefix-free}:
no output code is a prefix of any other.

To compress,
we break the signature vector \(\vec{s}\)
into a sequence of pairs (or triples),
and output the concatenation of the corresponding codes.
For decompression, we look at our input (a sequence of bytes)
as a stream of bits; the prefix-free property of the \DuoHuff and
\TriHuff codes
allows us to quickly and unambiguously recognise codewords,
and output the corresponding pairs (or triples) of trits,
using the algorithm of~\cite{1997/Moffat--Turpin}.
The decoding algorithm requires a few pre-computed lists
for fast lookup;
we give the relevant data, along with
details on the compression and decompression algorithms,
in Appendix~\ref{ap:comp}.

\begin{table}
    \caption{\DuoHuff and \TriHuff encodings for \Supertubos \Wavelet signatures}
    \label{tab:duo-Huffman}
    \label{tab:trio-Huffman}
    \centering
    \begin{tabular}{r|l||r|l|r|l|r|l}
        \toprule
        \multicolumn{2}{c||}{\textbf{\DuoHuff}}
        &
        \multicolumn{6}{c}{\textbf{\TriHuff}}
        \\
        Pair & Code &
        Triple & Code &
        Triple & Code &
        Triple & Code
        \\
        \midrule
        \((2,2)\) & \texttt{00}
        &
        \((2,2,2)\) & \texttt{000}
        &
        \((2,0,2)\) & \texttt{110101}
        &
        \((1,0,1)\) & \texttt{111110}
        \\
        \((2,1)\) & \texttt{01}
        &
        \((2,2,1)\) & \texttt{001}
        &
        \((0,2,2)\) & \texttt{110110}
        &
        \((0,1,1)\) & \texttt{1111110}
        \\
        \((1,2)\) & \texttt{100}
        &
        \((2,1,2)\) & \texttt{010}
        &
        \((2,1,0)\) & \texttt{110111}
        &
        \((2,0,0)\) & \texttt{111111100}
        \\
        \((1,1)\) & \texttt{101}
        &
        \((1,2,2)\) & \texttt{011}
        &
        \((2,0,1)\) & \texttt{111000}
        &
        \((0,2,0)\) & \texttt{1111111010}
        \\
        \((2,0)\) & \texttt{1100}
        &
        \((2,1,1)\) & \texttt{100}
        &
        \((0,2,1)\) & \texttt{111001}
        &
        \((0,0,2)\) & \texttt{1111111011}
        \\
        \((1,0)\) & \texttt{1101}
        &
        \((1,2,1)\) & \texttt{1010}
        &
        \((1,2,0)\) & \texttt{111010}
        &
        \((1,0,0)\) & \texttt{1111111100}
        \\
        \((0,2)\) & \texttt{1110}
        &
        \((1,1,2)\) & \texttt{1011}
        &
        \((1,0,2)\) & \texttt{111011}
        &
        \((0,1,0)\) & \texttt{1111111101}
        \\
        \((0,1)\) & \texttt{11110}
        &
        \((1,1,1)\) & \texttt{1100}
        &
        \((0,1,2)\) & \texttt{111100}
        &
        \((0,0,1)\) & \texttt{1111111110}
        \\
        \((0,0)\) & \texttt{11111}
        &
        \((2,2,0)\) & \texttt{110100}
        &
        \((1,1,0)\) & \texttt{111101}
        &
        \((0,0,0)\) & \texttt{1111111111}
        \\
        \bottomrule
    \end{tabular}
\end{table}

Since Huffman and MR coding is variable-length,
compressed \Wavelet signatures have variable length.
An extra level of variation is introduced by the fact
that the supposed symbol probabilities are not exact.
For example,
while we know that there are exactly \(n-w\) zeroes
in a full-length \Wave signature vector,
when we move to truncated \Wavelet signatures
the number of zeroes follows a binomial distribution
centred on \(k-wk/n\).

Figure~\ref{fig:compression}
shows the lengths of 22000 random \Supertubos \Wavelet signatures
compressed with the \DuoHuff and \TriHuff encodings.
We see that experimentally,
\DuoHuff-compressed \Supertubos \Wavelet signature vectors
fit in 950 bytes on the average (within 5\% of the theoretical optimum),
and almost always less than 970 bytes;
\TriHuff-compressed \Supertubos \Wavelet signature vectors
fit in 910 bytes on the average (within 2.5\% of the optimum),
and almost always less than 930 bytes.
Once we have included 32 bytes worth of salt,
\Supertubos \Wavelet signatures fit comfortably within one kilobyte.

\begin{figure}[h!]
    \caption{Frequencies of \DuoHuff and \TriHuff encoding lengths
    (in bytes) for $22\,000$ random \Supertubos \Wavelet signatures.}
    \label{fig:compression}
    \centering
    \includegraphics[width=\linewidth]{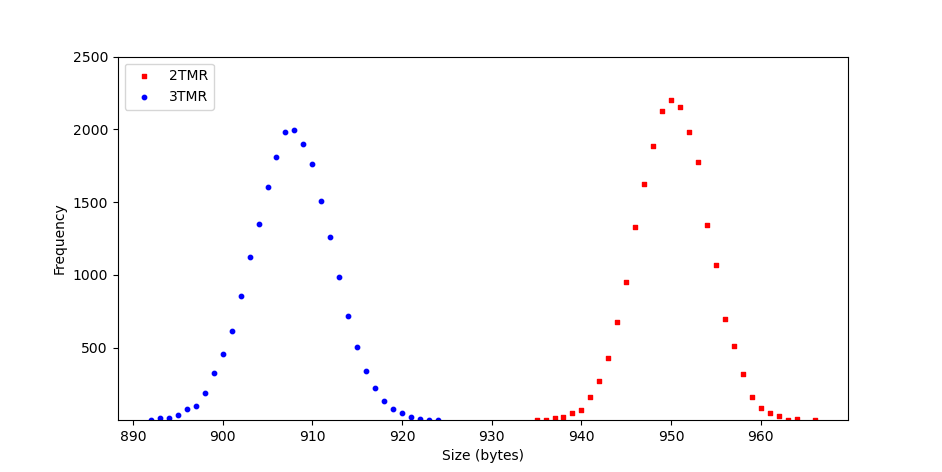}
\end{figure}

\section{
    Software implementations
}
\label{sec:implementation}
This section describes the techniques we used
to put the algorithms of~\S\ref{sec:scheme}
into practice on an
Intel\textsuperscript{\tiny\textregistered} Core\textsuperscript{TM}
platform with AVX,
and also on an ARM Cortex-M4 platform.

\subsection{Efficient ternary vector arithmetic}
\label{sec:ternary}

Our implementation
uses the bitsliced \(\FF_3\)-vector arithmetic
described in~\cite{Coolsaet};
the underlying
\(\FF_3\)-arithmetic
is the \emph{Type-2} representation
of~\cite[\S4.2]{2008/Kawahara--Aoki--Takagi}.
We use \(\oplus\), \(\&\), \(|\), and $\neg$
to denote logical XOR, AND, OR, and NOT, respectively.

Each element $a$ in $\FF_3$ is represented using
a pair of bits:
\[
    \FF_3 \ni a \longleftrightarrow (a_h, a_l) \in \{0,1\}^2
    \,;
\]
the encoding
of~\cite[\S4.2]{2008/Kawahara--Aoki--Takagi}
uses
\begin{align*}
    0 & \longleftrightarrow (0,0)
    \,,
    &
    1 & \longleftrightarrow (0,1)
    \,,
    &
    2 & \longleftrightarrow (1,1)
    \,.
\end{align*}
Additions and subtractions can be computed
at a cost of $7$ logical operations each
using the identities
\begin{align}
    \label{eq:addition_f_3}
    \big((a+b)_h,(a+b)_l\big)
    & =
    \big(
        (a_l \oplus b_h) \& (a_h \oplus b_l)
        \,,
        (a_l \oplus b_l) | ((a_h \oplus b_l) \oplus b_h)
    \big)
    \,,
    \\
    \label{eq:sub_type2_f_3}
    \big((a-b)_h,(a-b)_l\big)
    & =
    \big(
        (a_l \oplus (b_h \oplus b_l)) \& (a_h \oplus b_l)
        \,,
        (a_l \oplus b_l) | (a_h \oplus b_h)
    \big)
    \,;
\end{align}
pure negations can be computed using
\begin{equation}\label{eq:neg_f_3}
    \big((-a)_h, (-a)_l\big)
    =
    \big(a_h \oplus a_l, a_l\big)
    \,.
\end{equation}

This representation lends itself well to bitslicing~\cite{Biham97a}.
Working on a platform with \(\beta\)-bit machine words,
we break \(\FF_3\)-vectors down into a sequence of subvectors
\(\vec{a}\) in \(\FF_3^\beta\),
each of which is encoded as a pair of \(\beta\)-bit words:
\[
    \FF_3^\beta \ni \vec{a}
    \longleftrightarrow
    ( \vec{a}_h , \vec{a}_l )
    :=
    \big(
        \big((a_0)_h,\ldots,(a_{\beta-1})_h\big)
        ,
        \big((a_0)_l,\ldots,(a_{\beta-1})_l\big)
    \big)
    \,.
\]
The same sequences of logical operations
defining the addition, negation, and subtraction formul\ae{} above,
applied to the machine words \(\vec{a}_h\) and \(\vec{a}_l\),
thus compute addition, negation, and subtraction in parallel on \(\beta\) bits at a time.

\paragraph{AVX instructions.}
Going further in this direction, many computer architectures have special
instructions to vectorize logical operations across multiple words.
In our x86\_64 implementation of verification,
we use AVX$2$ instructions to
further speed up the bitsliced $\FF_3$-vector arithmetic above
by moving from \(\beta = 64\) to \(\beta = 256\).
More precisely, we use
Intel\textsuperscript{\tiny\textregistered} intrinsics including
\textsc{\_mm256\_xor\_si256},
\textsc{\_mm256\_or\_si256},
and
\textsc{\_mm256\_and\_si256},
as well as special instructions to load and store
the $256$-bit words.

\subsection{Key Generation}

Algorithm~\ref{algo:keygen} requires a pseudorandom bit generator
to sample the matrices, vectors, and permutation.
In our implementation we use the SHAKE256 XOF,
seeded with a \(\lambda\)-bit random string
(which can be used as a compressed representation of the private key if
desired).
To generate the random permutation in \(\Sym{n}\),
we construct the list of integers from 0 to \(n-1\),
and then shuffle it using
the modern Fisher-Yates Shuffle algorithm~\cite{knuth97}.

We also used SHAKE256 to generate the randomness
required in our other algorithms---though for security,
each of these uses a different random string as initial state.

\subsection{Signing}

Apart from the use of \Hash (Algorithm~\ref{algo:hash}),
our signing implementation is heavily based on the
version of \Decode included in the proof-of-concept \Wave trapdoor code
from~\cite{Wave-implem}.
The implementation of~\cite{Wave-implem} generates trapdoor challenges,
using random trits in place of hash function output.
(Indeed, it neither specifies nor uses a hash function,
so it cannot be used to sign messages as is.)
Apart from some use of bitsliced arithmetic as in~\S\ref{sec:ternary}
for vector-matrix multiplication,
we made absolutely no attempt to optimize or improve the signing algorithm.

Indeed, we privileged correctness and safety (and coherence
with~\cite{2019/Debris-Alazard--Sendrier--Tillich})
over speed for signing:
any modifications to the original algorithm must be accompanied by
a subtle and difficult verification of the distributions produced by the
decoder.
A verified \emph{and} optimized signing algorithm
is an important future goal for \Wave development.

For completeness,
we include pseudocode for \Decode
as Algorithm \ref{algo:dec} in Appendix~\ref{ap:decoder}.
Our implementation uses precomputed tables
for the distribution of rejection with \SampleV and \SampleU,
but these values could be recomputed on the fly
using the parameters proposed in
Table~\ref{table:genParameters} and in Appendix~\ref{ap:parameters}.

To hash into \(\FF_3^{n-k}\),
our implementation uses Algorithm~\ref{algo:hash}
with SHA3-512 as \BinaryHash
and SHAKE-256 as the XOF in \Expand.

To sign with Algorithm~\ref{algo:sign},
we need to (re)generate the parity-check matrix.
While it is possible to avoid this by storing the parity check matrix
with the private key,
this imposes an unacceptably high space overhead
(this matrix is much bigger than the expanded private key,
which is already very large).

\begin{remark}
    We remind the reader that the signing process
    requires great care. One needs to respect the
    parameters and distributions involved in the rejection sampling to produce an output that
    is statistically indistinguishable from a random word of weight \(w\).
    Any deviation from this distribution could result in an attack on
    the signature scheme.
\end{remark}

\subsection{Verification}

In \Wave, the verification process
consist into check the weight of a vector, and compare the
hash of the message. However, we change this to simplify
the verification process.

In summary, Algorithm~\ref{algo:stverif-M}
consists of a for loop with additions and subtractions
of vectors. For those operations, we use the efficient
ternary vector arithmetic. After the loop, we
need to check the weight of $\vec{s}$ and $\vec{x}$.
To check the weight of a vector using Type 2 encoding,
we only need to count the number of ones in the low
part of the encoding, that is, given a
vector $\vec{a} = (\vec{a_l}, \vec{a_h})$ we
only need to compute $\BinaryWeight(\vec{a_l})$.

The Hamming weight, or \BinaryWeight function can be implemented
by counting the number of $1$'s in the binary representation
of an integer, or in the case the amount of $1$ bits in the
register. The binary weight for AVX2 can be computed
using~\cite{fastpop}. However, the generic code for
other architectures needs
to be done without AVX instructions. Listing~\ref{lst:pop}
shows the method for $32$-bit words; the technique extends easily
to $64$-bit words.

\begin{lstlisting}[language=C,style=customc, caption=BinaryWeight for $32$ bit architetures.,label={lst:pop}]
static uint32_t BinaryWeight (uint32_t n)
  {
      uint32_t m1 = (~(wave_word)0) / 3;
      uint32_t m2 = (~(wave_word)0) / 5;
      uint32_t m4 = (~(wave_word)0) / 17;

      n -= (n >> 1) & m1;
      n = (n & m2) + ((n >> 2) & m2);
      n = (n + (n >> 4)) & m4;
      n += n >>  8;
      n += n >> 16;
      return n & 0x7F;
  }
\end{lstlisting}

\subsection{Cortex-M4 implementation}

Implementing \Wavelet verification
for an ARM Cortex-M4 platform
presents several interesting challenges.

\paragraph{QSPI and Flash.}

\Wavelet has relatively compact signatures,
but large public keys---generally far too large
for the RAM available in many small devices.
In our case,
the nRF52840-dev Kit
provides $256$~\si{\kilo\byte} RAM.
However, it also has
$64$~\si{\mega\byte} of external QSPI (Queued Serial Peripheral
Interface) flash memory,
which allows reads of up to $16$ MB/sec
using the \texttt{nrf\_drv\_qspi\_read} instruction.\footnote{%
    Function documented at
    \url{https://infocenter.nordicsemi.com/index.jsp?topic=/com.nordic.infocenter.sdk5.v14.2.0/group__nrf__drv__qspi.html}.}
For this board, therefore,
we can flash the public key into the external flash
and read it later;
this is a useful approach for applications
where a device must verify signatures
from a signer known at installation time,
e.g. for secure software updates.

While this allows us to get the entire \Wavelet public key onto the
device, we still cannot load the entire matrix into SRAM.
In our Cortex-M4 implementation of Algorithm~\ref{algo:stverif-M}
we load one row at a time,
as and when it is used in a computation.
To fetch the data we have a base address, from which we can easily
compute the address for each required row.

\paragraph{Streaming vs Flashing public keys.}
Another way to handle big keys is to stream them;
this approach has been used for other postquantum signatures on
Cortex-M4~\cite{GonzalezHKKLSWW21}.
However,
as noted in~\cite{ChenC21}, this design choice depends on the application.
In our case, we have two reasons to keep the public key in
the flash. First, streaming the data is slower:
streaming speeds in~\cite{GonzalezHKKLSWW21}
were around $500$ kbits/s,
which is slower than QSPI.
Second, streamed keys require validation,
which would complicate and slow verification.

\section{
    Experimental results
}
\label{sec:results}
In our tests, we ran the code
on an Intel\textsuperscript{\tiny\textregistered} Core\textsuperscript{TM} i$7$-$10610$U processor
under Arch Linux (Kernel 5.14.11),
compiling with GCC version $11.1.0$.
We counted cycles using \texttt{cpucycles.h}
from SUPERCOP\footnote{\url{https://bench.cr.yp.to/supercop.html}},
disabling Turbo Boost and SpeedStep.
We first ran our tests $100$ times to clean
the cache, then $100$ more times to get
the average number of cycles.

Table~\ref{tab:intel} shows
the results of our implementation for KeyGen, Sign,
and \Wavelet Verification.
We also include timings for
an implementation of classic \Wave verification,
based on the same underlying arithmetic,
for comparison.
We implemented both verifications
twice:
once with AVX instructions and one without them.
(KeyGen and Sign do not use AVX instructions,
since we focused on optimizing verification.)

We also tested verification in a Nordic nRF52840 Development Kit,
which provides an ARM Cortex-M4 microcontroller running at \SI{64}{\mega\hertz},
with $256$ \si{\kilo\byte} RAM, $1$ \si{\mega\byte} flash and $64$ \si{\mega\byte} of
external memory. We compile the code using
GNU Arm Embedded Toolchain version $11.2.0$. We used the timer library provided
by Nordic Semiconductors to measure running time in ms and in ``ticks''.
Table~\ref{tab:cortex} presents these results.

Timings for signature compression and decompression
appear in Table~\ref{tab:encode}
(Intel\textsuperscript{\tiny\textregistered} Core\textsuperscript{TM})
and Table~\ref{tab:encode_arm} (ARM Cortex-M4).
Timings were measured in the same way as above.
We see that (de)compression times are negligible
in comparison with the other operations.

\begin{table}[!ht]
    \caption{\Wave and \Wavelet signature scheme timings on an Intel\textsuperscript{\tiny\textregistered} Core\textsuperscript{TM}.}
    \label{tab:intel}
    \centering
    \begin{tabular}{r|rrr}
        \toprule
        Operation & Optimization & Time (\SI{}{\si{\ms}})  & Cycles      \\
        \midrule
        Key Generation & \texttt{-O3} & $3144.00$ & $7\,403\,069\,461$ \\
        \midrule
        Sign & \texttt{-O3} & $702.77$ & $1\,644\,281\,062$ \\
        \midrule
        \multirow{2}{*}{\Wave Verification} & \texttt{-O3} & $2.20$ & $5\,098\,394$ \\
                                  & AVX instr & $2.19$ & $5\,064\,048$ \\

        \multirow{2}{*}{\Wavelet Verification} & \texttt{-O3} & $1.19$ & $2\,733\,457$ \\
                                    & AVX instr & $0.45$ & $1\,087\,538$ \\

        \bottomrule
    \end{tabular}
\end{table}

\begin{table}[!ht]
    \caption{\Wavelet verification time on an ARM Cortex-M4 (nRF52840-DK).}
    \label{tab:cortex}
    \centering
    \begin{tabular}{c|rrr}
        \toprule
        Operation & Optimization & Time (\SI{}{\si{\ms}})  & Ticks      \\
        \midrule
        \Wavelet Verification\_static & \texttt{-O2s} & $402$ & $13172$ \\
        \bottomrule
    \end{tabular}
\end{table}

\begin{table}[!ht]
    \caption{\Wavelet signature compression and decompression timings
    on an Intel\textsuperscript{\tiny\textregistered} Core\textsuperscript{TM}.}
    \label{tab:encode}
    \centering
    \begin{tabular}{rr|rrr}
        \toprule
        Encoding & Operation & Optimization & Time (\si{\ms})  & Cycles      \\
        \midrule
        \multirow{2}{*}{\DuoHuff} & Compression & \texttt{-O3} & $0.041$ & $79\,529$ \\
        & Decompression & \texttt{-O3} & $0.043$ & $81\,543$ \\
        \midrule
        \multirow{2}{*}{\TriHuff} & Compression & \texttt{-O3} & $0.065$ & $126\,230$ \\
        & Decompression & \texttt{-O3} & $0.057$ & $109\,546$ \\
        \bottomrule
    \end{tabular}
\end{table}

\begin{table}[!ht]
    \caption{\Wavelet signature compression and decompression timings on an ARM Cortex-M4 (nRF52840-DK).}
    \label{tab:encode_arm}
    \centering
    \begin{tabular}{rr|rrr}
        \toprule
        Encoding & Operation & Optimization &  Time (\SI{}{\si{\ms}})  & Ticks
        \\
        \midrule
        \multirow{2}{*}{\DuoHuff} & Compression & \texttt{-O2s} & $3$ & $123$
        \\
        & Decompression & \texttt{-O2s} & $7$ & $231$
        \\
        \midrule
        \multirow{2}{*}{\TriHuff} & Compression & \texttt{-O2s} & $3$ & $105$
        \\
        & Decompression & \texttt{-O2s} & $5$ & $171$
        \\
        \bottomrule
    \end{tabular}
\end{table}

Our code is available from \url{https://github.com/waveletc/wavelet}.

\section{
    Conclusion
}
\label{sec:conclusion}

This work presents the first complete implementation of the \Wave
postquantum signature scheme.
Our variant, \Wavelet,
includes important optimizations---from the highest level to the
lowest---yielding much shorter signatures and a much faster verification
algorithm.

To interface this fundamentally ternary signature scheme with the binary
world, we defined a practical hash function
and a highly efficient compression algorithm for signatures,
taking advantage of their unusually high weight.
This yields even shorter signatures
at very little cost.
For signatures targeting the 128-bit classical security level
(and NIST Level~1 postquantum security),
the \emph{Supertubos} parameter set yields \Wavelet signatures
that fit in less than a kilobyte.
\Wavelet signatures are therefore the smallest code-based signatures at this
security level,
and are competitive with signatures across other post-quantum paradigms.

We have also demonstrated the feasibility
of \Wavelet verification on ARM Cortex-M4 microcontroller platforms.
While \Wavelet public keys are generally much bigger than the available SRAM,
our implementation exploits large external flash where it is available.
While memory access latency remains a major bottleneck,
\emph{Supertubos} \Wavelet signatures decompress and verify
in under half a second on the Nordic nRF52840 Development Kit board.
Our verification algorithm should also be compatible with streaming public keys,
similar to the approach of~\cite{GonzalezHKKLSWW21},
though we have not attempted to implement this.

\paragraph{Future Work.}
Our results highlight some interesting outstanding problems, 
both theoretical and practical.
On the theoretical side,
\Wave for \(q > 3\) needs more analysis.
The derivation of alternative \Wave parameter sets---targeting NIST Levels 3 and 5,
for example, or tuning key and signature sizes for better performance in
embedded and other environments---is also a priority.

On the practical side,
key generation and signing must be improved and optimized.
Another crucial problem
is to develop a signing algorithm that is secure against
side-channel attacks, or at least constant-time.
For the current signing algorithm,
with its complexity
and dependence on rejection sampling,
this is a highly nontrivial problem.

\bibliographystyle{alpha}
\bibliography{references}

\newpage
\appendix
\section{The \Wave decoder}
\label{ap:decoder}

Algorithm~\ref{algo:dec} is the \Wave decoder,
which finds the error vector $\vec{e}$ with
Hamming weight $w$ such that $\vec{e}\Hsec = \vec{s}$.
It uses the precomputed data and rejection sampling parameters from Appendix~\ref{ap:parameters}.
It uses \ElimSingle (Algorithm~\ref{algo:ES})
from key generation,
and auxiliary functions
\FreeSetV (Algorithm~\ref{algo:FreeSetV})
and
\FreeSetU (Algorithm~\ref{algo:FreeSetU}).

\begin{algorithm}[!ht]
    \caption{The \Wave decoder}
    \label{algo:dec}
    \Function{\Decode{%
        $\vec{s}$,
        $\vec{R}_U$,
        $\vec{R}_V$,
        $\vec{a}$,
        $\vec{b}$,
        $\vec{c}$,
        $\vec{d}$}
    }{
        \KwIn{%
            $\vec{s}\in\F_3^{n-k}$,
            $\vec{R}_U \in \F_3^{(n/2-k_U)\times k_U}$,
            $\vec{R}_V \in \F_3^{(n/2-k_V)\times k_V}$,
            $\vec{a}$,
            $\vec{b}$,
            $\vec{c}$,
            $\vec{d} \in \F_3^{n/2}$
        }
        \KwOut{%
            $\vec{e}\in\F_{3}^n$
            such that $|\vec{e}|=w$
            and $\vec{e}\transpose{\Hsec}=\vec{s}$
            \\
            \qquad
            \qquad
            where $\Hsec=$
            \ParityCheckUV{$\vec{R}_U$, $\vec{R}_V$, $\vec{a}$, $\vec{b}$, $\vec{c}$, $\vec{d}$}
        }
        $\vec{s}_{U} \gets (s_0,\ldots,s_{n/2-k_U-1})$
        \tcp*{\(\vec{s}_{U} \in \F_{3}^{n/2-k_U}\)}
        $\vec{s}_{V} \gets (s_{n/2-k_U},\ldots,s_{n-k-1})$
        \tcp*{\(\vec{s}_{V} \in \F_{3}^{n/2-k_V}\)}
        $\vec{H}\gets(\mathbf{I}_{n/2-k_V}\mid\vec{R}_V)$
        \tcp*{\(\vec{H} \in \F_{3}^{(n/2-k_V)\times n/2}\)}
        \Repeat(\tcp*[f]{rejection sampling}){\AcceptSampleV{$t$}}{
            $\ell \gets $ \SampleV{}
            \;
            $\vec{x}_{V} \Unif \F_{3}^{d}$
            \;
            $\vec{x}_{V}' \Unif \big\{ \vec{x} \in \F_{3}^{k_{V}-d} \colon |\vec{x}| = \ell \big\}$
            \;
            $\left( \left( \vec{I}_{n/2-k_{V}} | \vec{A} \right),\vec{s}_{0}, \pi_{V}\right)\gets$  \FreeSetV{$\vec{H},\vec{s}_{V}$}\;
            $\vec{e}_{V} \gets \pi_{V}^{-1}\left( \left(\vec{s}_{0} - \left( \vec{x}_{V} \parallel \vec{x}'_{V} \right)\transpose{\vec{A}} \right)  \parallel \left( \vec{x}_{V} \parallel \vec{x}'_{V} \right) \right)$
            \;
            $t \gets |\vec{e}_{V}|$\;
        }
        $\vec{H}\gets(\mathbf{I}_{n/2-k_U}\mid\vec{R}_U)$
        \tcp*{\(\vec{H} \in \F_{3}^{(n/2-k_U)\times n/2}\)}
        $\mathcal{E}_{V} \gets$ \Supp{$\vec{e}_{V}$}
        \tcp*{\Supp{$\vec{x}$} = list of coordinates where $x_{i} \neq 0$}
     	$\vec{a}_{V} \gets (\vec{b}\hsp\vec{e}_V\parallel\vec{d}\hsp\vec{e}_V)$ \tcp*{for $\vec{x},\vec{y}\in \F_{3}^{n/2}$, $\vec{x}\hsp \vec{y} \eqdef (x_{i}y_{i})_{0\leq i \leq n/2-1}$}
        \Repeat(\tcp*[f]{rejection sampling}){%
            \AcceptSampleU{$m_{1}(\vec{e})$, $t$} \Where $m_{1}(\vec{x})
            \eqdef \#\{0 \leq i \leq \frac{n}{2}-1 : x_{i} \neq x_{\frac{n}{2}+i}\}$
        }{
            $k_{\neq 0} \gets$ \SampleU{$t$}
            \;
            $\left( \left(\vec{I}_{n/2-k_{U}} | \vec{A} \right),\vec{s}_{1}, \pi_{U}\right) \gets$ \FreeSetU{$k_{\neq 0}$, $\mathcal{E}_{V}$, $\vec{H}$, $\vec{s}_{U}$}
            \;
            $(z_{i})_{0 \leq i \leq n/2-1} \gets \pi_{U}(\vec{e}_V)$
            \;
            \Repeat{$|\vec{e}| = w$}{
                $\vec{x}_{U} \Unif \F_{3}^{d}$
                \;
                $\vec{x}_{U}' \Unif \big(\F_{3}\backslash \{0\}\big)^{k_{U} - k_{\neq 0} - d}$
                \;
                \(
                    \vec{x}_{U}''
                    \gets
                    ( x_{j})_{0 \leq j < k_{\neq 0}}
                \)
                \Where
                \(
                    x_{j} \gets
                    \frac{b_u}{a_u}z_{n/2 - k_{\neq 0} + j}
                    +\frac{d_u}{c_u}z_{n/2 - k_{\neq 0} + j} \mbox{ where } u \eqdef {\pi_{U}^{-1}(j)}
                \)
                \;
                $\vec{e}_{U} \gets \pi_{U}^{-1}\left( \left(\vec{s}_{1} - \left( \vec{x}_{U}\parallel \vec{x}_{U}'\parallel\vec{x}''_{U} \right)\transpose{\vec{A}}\right) \parallel \left( \vec{x}_{U}\parallel \vec{x}_{U}'\parallel\vec{x}''_{U} \right)\right)$
                \;
                $\vec{a}_{U} \gets (\vec{a}\hsp\vec{e}_U\parallel\vec{c}\hsp\vec{e}_U)$
                \;
                $\vec{e} \gets \vec{a}_{V} + \vec{a}_{U}$
                \;
            }
        }
        \Return{$\vec{e}$}
    }
\end{algorithm}

The \FreeSetV and \FreeSetU algorithms are superficially similar,
but there are important differences in their
input and permutation sampling.
In particular,
\FreeSetU uses an auxiliary function
\RandPermSet{$n$, $k_{\neq 0}$, $(i_{0},\dots,i_{t-1})$},
which samples a random permutation $\pi \in \mathcal{S}_{n}$
subject to the constraints
\begin{align*}
    \left\{ \pi(0),\dots,\pi(t-k_{\neq 0}-1) \right\}
    & \subseteq
    (i_{0},\dots,i_{t-1})
    \shortintertext{and}
    \left\{\pi(t-k_{\neq 0}),\dots,\pi(n-k_{\neq 0}-1) \right\}
    & =
    \{1,\dots,n\} \backslash \{i_{0},\dots,i_{t-1}\}
    \,.
\end{align*}

\begin{algorithm}[ht]
    \caption{FreeSetV}
    \label{algo:FreeSetV}
    \KwIn{$\vec{H}\in \F_3^{(n-k)\times n}$, $\vec{s}\in\F_{3}^{n-k}$}
    \KwOut{%
        $\vec{H} =  (\vec{I}_{n-k} | \vec{M}) \in \F_3^{(n-k)\times n}$
        where $\vec{M} \in \F_3^{(n-k)\times n}$,
        $\vec{s}\in\F_{3}^{n-k}$, $\pi \in\Sym{n}$
    }
    \Function{\FreeSetV{$\vec{H},\vec{s}$}}{
        \SetKwData{Pivot}{pivot}
        \SetKwData{Nonpivot}{nonpivot}
        \Repeat{$\#\Nonpivot \leq d$}{
            $\pi \Unif \Sym{n}$
            \;
            $\Pivot \gets ()$
            \;
            $\Nonpivot \gets ()$
            \;
            $(r, c) \gets (0, 0)$
            \;
            \While{$r < n-k$ \And $c < n$}{
                \((\vec{H}\parallel\transpose{\vec{s}},P) \gets\) \ElimSingle{$\vec{H}\parallel \transpose{\vec{s}}$, $r$, $\pi(c)$}
                \tcp*{Algorithm~\ref{algo:ES}}
                \uIf{$P$}{
                    $(\Pivot,r) \gets (\Pivot\parallel(\pi(c)), r+1)$
                }
                \Else{
                    $\Nonpivot \gets \Nonpivot\parallel(\pi(c))$
                }
                $c \gets c + 1$
                \;
            }
        }
        $\pi \gets \Pivot \parallel \Nonpivot \parallel (\pi(c),\dots,\pi(n-1))$ \;
        \Return{$(\pi(\vec{H}), \vec{s},\pi)$}
    }
\end{algorithm}

\begin{algorithm}[ht]
	\caption{FreeSetU}
    \label{algo:FreeSetU}
    \KwIn{$\mathcal{E} \subseteq \{0,\ldots,n-1\}$, $k_{\neq 0}\in\{0,\dots\#\mathcal{E}\}$, $\vec{H}\in \F_3^{(n-k)\times n}$, $\vec{s}\in\F_{3}^{n-k}$}
    \KwOut{%
        $\vec{H} =  (\vec{I}_{n-k} | \vec{M}) \in \F_3^{(n-k)\times n}$
        where $\vec{M} \in \F_3^{(n-k)\times n}$,
        $\vec{s}\in\F_{3}^{n-k}$, $\pi \in\Sym{n}$
    }
    \Function{\FreeSetU{$k_{\neq 0}$, $\mathcal{E}$, $\vec{H},\vec{s}$}}{
        \SetKwData{Pivot}{pivot}
        \SetKwData{Nonpivot}{nonpivot}
        \Repeat{$\#\Nonpivot \leq d$}{
            $\pi \gets$ \RandPerm{$n$, $k_{\neq 0}$, $\mathcal{E}$}
            \;
            $\Pivot \gets ()$
            \;
            $\Nonpivot \gets ()$
            \;
            $(r, c) \gets (0, 0)$
            \While{$r < n-k$ \And $c < n$}{
                \(
                    (\vec{H}\parallel \transpose{\vec{s}},P) 
                    \gets
                    \ElimSingle(\vec{H}\parallel \transpose{\vec{s}}, r, \pi(c))
                \)
                \tcp*{Algorithm~\ref{algo:ES}}
                \uIf{$P$}{
                    $(\Pivot,r) \gets (\Pivot\parallel(\pi(c)), r+1)$
                }
                \Else{
                    $\Nonpivot \gets \Nonpivot\parallel(\pi(c))$
                }
                $c \gets c + 1$
                \;
            }
        }
        $\pi \gets \Pivot \parallel \Nonpivot \parallel (\pi(c),\dots,\pi(n-1))$
        \;
        \Return{$(\pi(\vec{H}),\vec{s},\pi)$}
    }
\end{algorithm}

\section{Rejection sampling parameters}
\label{ap:parameters}

There are two rejection sampling steps in the signing algorithms of \Wave
(specifically, in the \Wave decoder).
We split the description of their parameters in two parts:
the \emph{$V$-part} and the \emph{$U$-part}.
We write
$$
k_{V}' \eqdef k_{V} -d \quad \mbox{and} \quad k_{U}' \eqdef k_{U} - d
$$
where $k_{U},k_{V}$ and $d$ are given in Table \ref{table:genParameters}.

\paragraph{The $V$-part.}
This first rejection sampling step involves several functions
on \(\{0,\ldots,n/2\}\), namely
\begin{align*}
    \label{eq:fvrs}
    f_{V}^{\textup{rs}}(i)
    & \eqdef
    \frac{1}{\mathop{\max}\limits_{\substack{0 \leq j \leq n/2}}\frac{q_{1}^{\textup{unif}}(j)}{q_1(j)} \quad}\; \frac{q_{1}^{\textup{unif}}(i)}{q_{1}(i)}
    \,,
    \\
    q_{1}^{\textup{unif}}(i)
    & \eqdef
    \frac{\binom{n/2}{i}}{\binom{n}{w}2^{w/2}} \mathop{\sum}\limits_{\substack{p=0 \\
			w+p \equiv 0 \mod 2}}^{i}\binom{i}{p}\binom{n/2-i}{(w+p)/2-i}2^{3p/2}
    \,,
    \shortintertext{and}
    q_1(i)
    & \eqdef
    \mathop{\sum}\limits_{t=0}^{i} \frac{\binom{n/2-k_V'}{i-t}2^{i-t}}{3^{n/2-k_V'}} p_{V}(t)
    \,,
\end{align*}
where
the function $p_{V}(\cdot)$ is a system parameter for the $V$-part of the rejection sampling.

\paragraph{The $U$-part.}
This second rejection sampling step involves several functions on the
domain
\[
    \left\{
        (s,t) :
        t_\textup{min} \le t \le t_\textup{max},
        0 \le s \le \min(t,n-w),
        s \equiv w \pmod{2}
    \right\}
\]
where
\[
    t_{\textup{min}} \eqdef 1953
    \qquad
    \text{and}
    \qquad
    t_{\textup{max}} \eqdef 2745
    \,;
\]
these functions are
\begin{equation}\label{eq:furs}
    f_{U}^{\textup{rs}}(s,t) \eqdef \frac{1}{\mathop{\max}\limits_{\substack{0 \leq u \leq v}}\frac{q_{2}^{\textup{unif}}(u,v)}{q_2(u,v)} \quad}\; \frac{q_{2}^{\textup{unif}}(s,t)}{q_{2}(s,t)}
\end{equation}
where
\[
    q_{2}^{\textup{unif}}(s,t)  \eqdef  \frac{\binom{t}{s}\binom{n/2 - t}{\frac{w+s}{2}-t}2^{\frac{3s}{2}}}{\sum\limits_{p} \binom{t}{p}\binom{n/2-t}{\frac{w+p}{2}-t}2^{\frac{3p}{2}}}
\]
and
\[
    q_2(s,t)  \eqdef \mathop{\sum}\limits_{\substack{t + k_U' - n/2 \leq k_{\neq 0} \leq t \\ k_0 = k_U' - k_{\neq 0} }} \frac{\binom{t - k_{\neq 0}}{s}\binom{n/2 - t - k_0}{\frac{w+s}{2} - t - k_0}2^{\frac{3s}{2}}}{\mathop{\sum}\limits_{p} \binom{t - k_{\neq 0}}{p}\binom{n/2 - t - k_0}{\frac{w+p}{2} - t - k_0}2^{\frac{3p}{2}} } \; p_{U}(k_{\neq 0},t)
\]
where the function $p_{U}(\cdot,t)$
is a system parameter for the $U$-part.

\paragraph{Precomputed Data.}
The values for
$(p_{V}(t))_{1 \leq i \leq n/2}$
and
$(p_{U}(\cdot,t))_{t_{\textup{min}} \leq t \leq t_{\textup{max}}}$
used in our implementation
can be found at \url{https://github.com/waveletc/wavelet} 
in the file \texttt{precomputedData.txt}.
We precomputed and stored the necessary values
of $f_{V}^{\textup{rs}}$
and $(f_{U}^{\textup{rs}}(\cdot,t))_{t_{\textup{min}} \leq t \leq t_{\textup{max}}}$
to 128-bit precision,
to save recomputing $f_{V}^{\textup{rs}}$
and $f_{U}^{\textup{rs}}$ each time they are called.
This requires significant storage;
if this is not available,
then the values can always be computed on the fly
using the formulas above.

\section{Signature compression and decompression algorithms}
\label{ap:comp}

For efficient signature compression and decompression
using \DuoHuff and \TriHuff,
we follow the algorithms of Moffat and
Turpin~\cite{1997/Moffat--Turpin}.
Tables~\ref{tab:lists_duo} and~\ref{tab:lists_tri} 
give the precomputed arrays required by the decompression algorithm.
The notation $[x] * n$ means a list consisting of the number $x$,
repeated $n$ times.

\begin{table}[!ht]
  \caption{Pre-computed lists for fast \DuoHuff decoding using the
    algorithm of~\cite{1997/Moffat--Turpin}.}
  \label{tab:lists_duo}
  \centering
  \begin{tabular}{r|l}
  \toprule
      \texttt{Code\_len} & $[2, 2, 3, 3, 4, 4, 4, 5, 5]$\\
  \midrule
      \texttt{First\_symbol} & $[0, 0, 0, 2, 4, 7, 10]$ \\
  \midrule
      \texttt{First\_code\_r} & $[0, 0, 0, 4, 12, 30, 32]$  \\
  \midrule
      \texttt{First\_code\_l} & $[0, 0, 0, 16, 24, 30, 32]$ \\
  \midrule
      \texttt{Search\_start} &  $[2] * 16 \parallel [3] * 8 \parallel [4] * 5 \parallel [5] * 3$\\
  \bottomrule
  \end{tabular}
\end{table}

\begin{table}[!ht]
  \caption{Pre-computed lists for fast \TriHuff decoding using the algorithm
    of~\cite{1997/Moffat--Turpin}.}
  \label{tab:lists_tri}
  \centering
  \begin{tabular}{r|l}
  \toprule
      \texttt{Code\_len} & $[3]*5 \parallel [4]*3 \parallel [6]*11 \parallel [7,9] \parallel [10]*6$\\
  \midrule
      \texttt{First\_symbol} & $[0, 0, 0, 0, 5, 8, 8, 19, 20, 20, 21, 27]$ \\
  \midrule
      \texttt{First\_code\_r} & $[0, 0, 0, 0, 10, 10, 52, 126, 126, 508, 1018, 1024]$  \\
  \midrule
      \texttt{First\_code\_l} & $[0, 0, 0, 0, 640, 832, 832, 1008, 1016, 1016, 1018, 1024]$ \\
  \midrule
      \texttt{Search\_start} &  $[3] * 640 \parallel  [4] * 192 \parallel [6] * 176 \parallel [7] * 8 \parallel [9] * 2 \parallel [10] * 6$\\
  \bottomrule
  \end{tabular}
\end{table}

\section{
    Mathematical proofs
}
\label{appendix:proofs}

\paragraph{Proof of Proposition~\ref{prop:s-hat-weight}.}
Recall that for each \(\vec{s}\) in \(\FF_3^k\),
the vector \(\vec{\hat{s}}\) in \(\FF_3^k\)
is defined by
\[
    \begin{rcases}
        \hat{s}_{2i} := s_{2i} + s_{2i+1}
        \\
        \hat{s}_{2i+1} := s_{2i} - s_{2i+1}
    \end{rcases}
    \quad
    \text{for}
    \quad
    0 \le i < k/2
    \,,
    \quad
    \text{and}
    \quad
    \hat{s}_{k-1} := s_{k-1}
    \text{ if \(k\) is odd}
    \,.
\]
We want to show that
\[
    |\vec{\hat{s}}| + |\vec{s}|
    =
    \frac{3}{2}(k+\epsilon) - 3\delta
    \,,
\]
where
\[
    \delta = \#\{0 \le i < k/2 \mid s_{2i} = s_{2i+1} = 0\}
\]
and
\[
    \epsilon 
    =
    \begin{cases}
        0 & \text{if \(k\) is even}\,,
        \\
        1 & \text{if \(k\) is odd and \(s_{k-1} \not= 1\)}\,,
        \\
        -1 & \text{if \(k\) is odd and \(s_{k-1} = 0\)}\,.
    \end{cases}
\]

First, suppose \(k\) is even.
Breaking \(\vec{\hat{s}}\) into pairs \((\hat{s}_{2i},\hat{s}_{2i+1})\)
for \(0 \le i < k/2\), we have
\begin{itemize}
    \item
        \(|(\hat{s}_{2i},\hat{s}_{2i+1})| = 0 \iff |(s_{2i},s_{2i+1})| = 0\),
        and there are \(\delta\) such pairs;
    \item
        \(|(\hat{s}_{2i},\hat{s}_{2i+1})| = 2 \iff |(s_{2i},s_{2i+1})| = 1\),
        and there are \(k-|\vec{s}| - 2\delta\) such pairs;
    \item
        \(|(\hat{s}_{2i},\hat{s}_{2i+1})| = 1 \iff |(s_{2i},s_{2i+1})| = 2\),
        and there are 
        \(|\vec{s}| - k/2 + \delta\) such pairs.
\end{itemize}
Summing over the pairs, we find
\[
    |\vec{\hat{s}}| + |\vec{s}|
    = 
    (0+0)\delta 
    + (2+1)(k-|\vec{s}| - 2\delta) 
    + (1+2)(|\vec{s}| - k/2 + \delta)
    =
    3k/2 - 3\delta 
    \,.
\]

If \(k\) is odd and \(s_{k-1} = 0\)
then \(\hat{s}_{k-1} = s_{k-1} \not= 0\),
so
\[
    |\vec{\hat{s}}|+|\vec{s}|
    = 
    |(\hat{s}_0,\ldots,\hat{s}_{k-2})| + |(s_0,\ldots,s_{k-2})|
    \,,
\]
and applying the even-length argument above to the subvectors
yields 
\[
    |\vec{\hat{s}}|+|\vec{s}| = 3(k-1)/2 - 3\delta
    \,.
\]

Similarly, if \(k\) is odd and \(s_{k-1} \not= 0\),
then \(\hat{s}_{k-1} = s_{k-1} \not= 0\),
so 
\[
    |\vec{\hat{s}}|+|\vec{s}|
    = 
    2 + |(\hat{s}_0,\ldots,\hat{s}_{k-2})| + |(s_0,\ldots,s_{k-2})|
    \,,
\]
and applying the even-length argument again yields 
\[
    |\vec{\hat{s}}|+|\vec{s}| = 2 + 3(k-1)/2 - 3\delta = 3(k+1)/2 - 3\delta
    \,.
\]

\paragraph{Proof of Proposition~\ref{prop:M}.}
Let \(\vec{R}\) be in \(\FF_3^{(n-k)\times k}\).
We want to show that for every \(\vec{s}\) in \(\FF_3^k\),
if \(\vec{\hat{s}}\) is defined as above,
then
\[
    \vec{s}\transpose{\vec{R}}
    = 
    -\vec{\hat{s}}\vec{M} 
\]
where \(\vec{M}\) is the matrix in \(\FF_3^{k\times(n-k)}\)
whose rows are
\[
    \begin{rcases}
        \vec{M}_{2i} := \transpose{\vec{R}}_{2i} + \transpose{\vec{R}}_{2i+1}
        \\
        \vec{M}_{2i+1} := \transpose{\vec{R}}_{2i} - \transpose{\vec{R}}_{2i+1}
    \end{rcases}
    \quad
    \text{for}
    \quad
    0 \le i < k/2
    \,,
    \quad
    \text{and}
    \quad
    \vec{M}_{k-1} := -\transpose{\vec{R}}_{k-1}
    \text{ if \(k\) is odd}
\]
(note the minus sign).
To show this, we simply sum the identities
\begin{align*}
    \hat{s}_{2i}\vec{M}_{2i} + \hat{s}_{2i+1}\vec{M}_{2i+1} 
    & =
    -(s_{2i}\transpose{\vec{R}}_{2i} + s_{2i+1}\transpose{\vec{R}}_{2i+1})
    \,,
    \intertext{for \(0 \le i < k/2\), and add}
    \hat{s}_{k-1}\vec{M}_{k-1} 
    & = 
    -s_{k-1}\transpose{\vec{R}}_{k-1}
\end{align*}
if \(k\) is odd.

\end{document}